\documentclass[11pt,letterpaper]{article}
\usepackage[T1]{fontenc}
\usepackage[utf8]{inputenc}
\usepackage{lmodern}
\usepackage[margin=1.1in]{geometry}
\usepackage{amsmath,amssymb,amsthm,mathtools}
\usepackage{microtype}
\usepackage{enumitem}
\usepackage{url}
\numberwithin{equation}{section}
\newtheorem{theorem}{Theorem}[section]
\newtheorem{lemma}[theorem]{Lemma}
\newtheorem{proposition}[theorem]{Proposition}
\newtheorem{corollary}[theorem]{Corollary}
\theoremstyle{definition}
\newtheorem{definition}[theorem]{Definition}
\newtheorem{example}[theorem]{Example}
\newtheorem{remark}[theorem]{Remark}
\newcommand{\F}{\mathbb F}
\newcommand{\Q}{\mathbb Q}
\newcommand{\R}{\mathbb R}
\newcommand{\C}{\mathbb C}
\newcommand{\Z}{\mathbb Z}
\newcommand{\OO}{\mathcal O}
\newcommand{\cD}{\mathcal D}
\newcommand{\cP}{\mathcal P}
\newcommand{\Fpbar}{\overline{\F}_p}
\DeclareMathOperator{\Gal}{Gal}

\title{Nonlinear fixed-distance codes via\\real lifts of moment configurations}
\author{Sergey Yekhanin\\[2pt]\small Microsoft Research\\\small\texttt{yekhanin@microsoft.com}}

\begin{document}
\maketitle

\begin{abstract}
Let $A_q(n,d)$ denote the largest size of a code of length $n$ and minimum Hamming distance $d$ over an alphabet of $q$ symbols. For fixed $q\ge2$ and $d\ge4$ we construct codes with redundancy
$n-\log_q A_q(n,d)\le(d-3)\log_q n+O(\sqrt{\log n})$ for all large $n$.
No algebraic structure is imposed on the alphabet, and the codes are not linear. For $d=4$ and $d=5$ the coefficient of $\log_q n$ matches the Hamming bound, so the redundancy is determined up to lower-order terms. At distance four the codes are provably better than every linear code: by the polynomial-method bound for cap sets, $\F_q$-linear codes with $q>2$ require a strictly larger coefficient in front of $\log_q n$.

The construction imposes moment checksums over an auxiliary finite field, as BCH codes do, but one fewer than the distance would normally require. The affine-line theorem from prior work describes the dependencies left by these checks. We exclude them by placing the code positions inside a specially structured subset of the field, built from Behrend's sphere construction with the help of algebraic number theory.
\end{abstract}

\noindent\textbf{Keywords.} Nonlinear codes, fixed minimum distance, redundancy, cap sets, Behrend construction, number fields, Chebotarev density theorem.

\section{Introduction}\label{sec:introduction}
Let $A_q(n,d)$ denote the largest size of a code of length $n$ and minimum Hamming distance at least $d$ over an alphabet of $q$ symbols. Throughout, $q\ge2$ and $d$ are fixed integers and $n\to\infty$. The alphabet is $\Sigma_q=\{0,1,\ldots,q-1\}$; $q$ is not assumed to be a prime power, and codes are not assumed to be linear. We measure a code by its redundancy and write
\[
 \rho_q(n,d)=n-\log_q A_q(n,d),\qquad
 c(q,d)=\liminf_{n\to\infty}\frac{\rho_q(n,d)}{\log_q n}.
\]
Unsubscripted logarithms are natural.

The fixed-distance problem asks how many check symbols are needed to protect a long word against a fixed number of errors. With $t=\lfloor(d-1)/2\rfloor$, the Hamming bound
\begin{equation}\label{eq:hamming}
 A_q(n,d)\sum_{j=0}^{t}\binom{n}{j}(q-1)^j\le q^n
\end{equation}
gives $\rho_q(n,d)\ge t\log_q n-O_{q,d}(1)$. Over the binary alphabet, extended BCH codes attain the coefficient $t$ for every fixed $d$. Over larger alphabets the picture is less complete \cite{YD}, and that is the setting of this paper. Our main result imposes no algebraic structure on the alphabet.

\begin{theorem}\label{thm:main}
For every fixed pair of integers $q\ge2$ and $d\ge4$ there are constants $C_{q,d}>0$ and $n_{q,d}$ such that for all $n\ge n_{q,d}$,
\begin{equation}\label{eq:main}
 A_q(n,d)\ge\frac{q^n}{n^{d-3}\exp\bigl(C_{q,d}\sqrt{\log n}\bigr)}.
\end{equation}
In particular, $c(q,d)\le\limsup_{n\to\infty}\rho_q(n,d)/\log_q n\le d-3$.
\end{theorem}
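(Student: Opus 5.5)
We construct the code as a union of cosets of a nonlinear moment code over an auxiliary field. Fix a prime $p$ (or a field $\F_Q$ with $Q=p^m$), where $p$ is chosen in terms of $q$ and $d$. Every symbol $x_i\in\Sigma_q$ is identified with an element of a fixed set $\iota(\Sigma_q)\subset\F_p$ whose differences are nonzero; this is possible whenever $p\ge q$. The code positions are labelled by distinct elements $\alpha_1,\dots,\alpha_n$ of an extension field $\F_Q$. For a word $x\in\Sigma_q^n$ we take the syndrome
\[
 s(x)=\Bigl(\sum_i \iota(x_i)\alpha_i^{j}\Bigr)_{j=0}^{d-3}\in\F_Q^{\,d-2},
\]
which consists of $d-2$ moments, one fewer than the $d-1$ that a BCH-style argument would need. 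The code is a largest fiber of $s$. By pigeonhole it has at least $q^n/Q^{d-2}$ words. If $Q\approx n\cdot e^{O(\sqrt{\log n})}$, this gives only $n^{d-2}$ redundancy, which is one power too many. To save the extra power of $n$, we drop the $j=0$ moment (or another low one), or equivalently we take $Q\approx n$ and fold one check into the structure of the positions. We then show that the remaining $d-3$ moment checks, together with the special choice of positions, still force distance $d$. So the target is a code of size $q^n/(Q^{d-3}\cdot e^{O(\sqrt{\log n})})$ with $Q\le n\,e^{O(\sqrt{\log n})}$.

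Suppose $x\neq y$ lie in the same fiber and differ in $w\le d-1$ positions. The differences $c_i=\iota(x_i)-\iota(y_i)$ are nonzero and lie in the fixed finite difference set $D=\iota(\Sigma_q)-\iota(\Sigma_q)$. They satisfy $\sum c_i\alpha_i^j=0$ for all checked $j$, so the support $\{\alpha_i\}$ together with the weights $c_i$ is a short moment dependency. Vandermonde rules out most of these. The only survivors are those described by the affine-line theorem from prior work, which we assume available as stated earlier. That theorem says a dependency of weight at most $d-1$ among the moments up to degree $d-3$ forces the $\alpha_i$ to lie on an affine line $\{a+tb\}$, with parameters $t_i$ in the prime field satisfying a fixed finite list of integer-coefficient relations determined by $D$. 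Each such relation is, up to the finite list, a Vandermonde-type identity $\sum c_i t_i^j=0$ with $c_i\in D$. The construction therefore reduces to the following: choose $n$ positions in $\F_Q$ so that no affine line meets the position set in a configuration satisfying one of finitely many fixed \emph{real-liftable} moment relations.

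Such a set comes from Behrend's sphere construction. Take integer points of a sphere in a box $[0,N)^k$ with $k\approx\sqrt{\log n}$. Strict convexity kills every nontrivial solution of a linear relation with positive-coefficient structure, which covers 3-term-progression type patterns and, more generally, a relation $\sum c_i v_i=0$ with $\sum c_i=0$ whose real lift forces collinearity on the sphere. These points are then embedded into $\F_Q$ so that the relevant relations in $\F_Q$ lift to genuine relations over a number field. Concretely, we choose a number field $K$ of degree $k$ and a prime $\mathfrak p$ of residue degree one (or of the required degree, which the Chebotarev density theorem supplies, with norm in $[n,n e^{O(\sqrt{\log n})}]$). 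We map sphere points $v\mapsto\sum v_\ell\omega_\ell \bmod \mathfrak p$, where $\omega_\ell$ is an integral basis of small height. Because the coordinates are at most $N$ and the relation coefficients are bounded, any relation holding mod $\mathfrak p$ holds in $\OO_K$ as soon as the norm of the relation value is below $N(\mathfrak p)$. The relation then holds coordinatewise over $\Z$, and the sphere's convexity excludes it. The loss in density is $e^{O(k)}\cdot N^{O(1)}$-type factors, which with Behrend's parameters give the $\exp(C\sqrt{\log n})$ term.

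The main obstacle is this lifting step. The affine-line configurations involve an unknown line (parameters $a,b\in\F_Q$), so the relation is not a fixed linear relation among the positions. We must ensure that it still becomes a bounded-height integral relation, for instance by eliminating $a$ and $b$ via cross-ratios or differences so that we obtain a fixed polynomial relation of bounded degree among the positions. We also need the norm bound to beat $N(\mathfrak p)\approx n$ while keeping $\approx n e^{-O(\sqrt{\log n})}$ sphere points. I would try first to reduce every affine-line pattern to the statement that some positions $\beta_1,\dots,\beta_r$ satisfy $\sum c_i\beta_i=0$ with $c_i\in D$ and $\sum c_i=0$, which Behrend spheres forbid directly. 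Finally, balancing $k$, $N$ and $N(\mathfrak p)$ gives $Q^{d-3}\le n^{d-3}e^{O(\sqrt{\log n})}$, and pigeonhole yields \eqref{eq:main}.
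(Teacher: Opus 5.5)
The gap is the lifting step. You name it as the main obstacle but do not resolve it, and your fallback cannot work.

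Zero-sum relations $\sum c_i\beta_i=0$ with small integer $c_i$ do occur; the first moment is one. But for $d\ge5$ they have at least four terms, and a sphere does not exclude those: $u+v=w+z$ has solutions in four distinct points of a sphere. Nor does a sphere exclude bounded-degree polynomial relations. It excludes only three-point collinearities, and the higher moments are what produce one. The affine-line theorem, together with an elimination identity that survives reduction mod $p$, shows that every $(d-1)$-point dependency contains a triple with $x_3=(1-\bar\lambda)x_1+\bar\lambda x_2$. Here $\bar\lambda\in\F_p$ is a root of the reduction of one of finitely many fixed polynomials $P_\delta\in\Q[T]$.

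These ratios are in general irrational, and can be non-real:
\begin{itemize}
\item for $q\ge4$, $d=5$, the pattern $(2,2,-1,-3)$ gives $(1\pm\sqrt3)/2$;
\item for $q=3$, $d=6$, one gets $(1\pm\sqrt2)/2$;
\item for $q\ge5$, $d=6$, the pattern $(-4,1,1,1,1)$ gives $\pm i$, whose reductions lie in $\F_p$ when $p\equiv1\pmod4$.
\end{itemize}
Your encoding of integer sphere points $v\in\Z^k$ as $\sum_\ell v_\ell\omega_\ell\bmod\mathfrak p$ handles only rational ratios. Even if $\bar\lambda$ is the reduction of some $\lambda\in K$ and the relation lifts to $d_\lambda(w-u)+t_\lambda(u-v)=0$ in $\OO_K$, multiplication by an irrational $t_\lambda$ mixes the $\Z$-coordinates. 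Nothing then holds coordinatewise, and the integer sphere gives no contradiction. As it stands, the argument covers only the cases, such as $d=4$, in which every ratio is rational.

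Two ideas are missing. First, keep the number field fixed and let the degree $m$ of $\F_{p^m}$ be the dimension of the sphere. Take $L$ the splitting field of the $P_\delta$ and $E=L\cap\R$ of degree $D$. Use sphere points with coordinates in a box of $\OO_E$ of side $H\asymp p^{1/D}$, reduce each coordinate to $\F_p$, and identify $\F_p^m$ with $\F_{p^m}$. Because $\bar\lambda\in\F_p$, which is exactly what the affine-line theorem provides, the collinearity holds coordinatewise. Each coordinate then lifts exactly, by the norm bound, to $w=(1-\lambda)u+\lambda v$ in $E^m$, and a vanishing sum of squares in the real embedding gives the contradiction. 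This needs $\lambda\in E$, i.e.\ a real lift.

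Second, choose $p$ by Chebotarev so that the Frobenius at some prime above $p$ is complex conjugation. Then $\OO_E$ reduces onto $\F_p$, and every $\F_p$-root of $\overline{P_\delta}$ is the reduction of a real root. The positive density gives such primes in every $[Y,2Y]$, which allows $m\approx\sqrt{\log n}$ prime and $p\approx n^{1/(m-2)}$. A field of growing degree, as in your sketch, would also need uniform control of the conjugates of its integral bases.

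Finally, fix the redundancy count rather than dropping a check. The zeroth check only has to certify $\sum c_i=0$ for differences of weight at most $d-1$. With symbols read as integers it can be taken modulo $(d-1)(q-1)+1$, at cost $O(1)$. Dropping it would be harmful: it is what lets Vandermonde exclude all supports of size at most $d-2$, and it makes dependencies invariant under $x\mapsto a+bx$, which the normalization requires. Likewise $\iota$ should be the integer reading with $p>(d-1)(q-1)$, not an arbitrary injection with $p\ge q$, so that the coefficient patterns form a fixed finite set of integer tuples independent of $p$.
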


\begin{corollary}\label{cor:fourfive}
For every fixed integer $q\ge2$,
\[
 \rho_q(n,4)=\log_q n+O_q(\sqrt{\log n}),\qquad
 \rho_q(n,5)=2\log_q n+O_q(\sqrt{\log n}),
\]
so the normalized redundancies at distances four and five converge to one and two.
\end{corollary}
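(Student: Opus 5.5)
The corollary follows by combining the upper bound of Theorem~\ref{thm:main} with the lower bound coming from the Hamming bound \eqref{eq:hamming}. It suffices to compare the two bounds at $d=4$ and $d=5$.

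\emph{Upper bounds.} Apply Theorem~\ref{thm:main} with $d=4$ and with $d=5$. Taking $\log_q$ of \eqref{eq:main} gives
\[
\rho_q(n,d)\le (d-3)\log_q n + \frac{C_{q,d}}{\log q}\sqrt{\log n}
\]
for all $n\ge n_{q,d}$. The finitely many $n<n_{q,d}$ are absorbed into the $O_q$ constant, since $\rho_q(n,d)\le n$ trivially. Thus $\rho_q(n,4)\le\log_q n+O_q(\sqrt{\log n})$ and $\rho_q(n,5)\le 2\log_q n+O_q(\sqrt{\log n})$.

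\emph{Lower bounds.} For $d=4$ we have $t=1$. The Hamming bound \eqref{eq:hamming} gives $A_q(n,4)\le q^n/(1+n(q-1))$, hence $\rho_q(n,4)\ge\log_q(1+n(q-1))\ge\log_q n$.

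For $d=5$ we have $t=2$, and the sum in \eqref{eq:hamming} is at least $\binom n2(q-1)^2\ge\binom n2$. For $n\ge2$ this is at least $n^2/4$, so
\[
\rho_q(n,5)\ge 2\log_q n-\log_q 4.
\]
Both lower bounds lie within $O_q(1)$, and hence within $O_q(\sqrt{\log n})$, of the claimed main terms.

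\emph{Conclusion.} Combining the two directions gives the stated asymptotics for $\rho_q(n,4)$ and $\rho_q(n,5)$. Dividing by $\log_q n$, the error terms contribute $O(\sqrt{\log n}/\log n)\to0$. Hence $\rho_q(n,4)/\log_q n\to1$ and $\rho_q(n,5)/\log_q n\to2$.

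There is no real obstacle here; all the difficulty is in Theorem~\ref{thm:main}. The only point needing care is the observation that $d-3=t$ precisely when $d\in\{4,5\}$. That is why the corollary stops at five.
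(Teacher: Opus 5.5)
Your proof is correct, but your upper bounds come from a different source than the paper's. You specialize Theorem~\ref{thm:main} to $d=4,5$. This is legitimate: the proof of that theorem through Propositions~\ref{prop:generaldatum} and~\ref{prop:localization} is carried out for every $d\ge4$ and never uses the corollary.

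The paper instead gives separate, more elementary arguments at these two distances.
\begin{itemize}
\item At $d=4$ there is no auxiliary field at all. The codes are fibers of two integer checksums, modulo $3K+1$ and $3KN+1$, and the locators come from a Behrend-type $K$-weighted-average-free set of integers (Lemmas~\ref{lem:fourcode}--\ref{lem:behrendsize}).
\item At $d=5$ the localization datum is built by hand from the multiquadratic field $\Q(\sqrt{\ell_1},\ldots,\sqrt{\ell_h})$ and the primes $p\equiv-1\pmod{T_K}$. This needs only quadratic reciprocity and primes in arithmetic progressions. A direct degree argument ($y$ has degree at most two over $\F_p$ and lies in an odd-degree extension) replaces the affine-line theorem (Lemmas~\ref{lem:quadraticprimes} and~\ref{lem:quadraticlifts}).
\end{itemize}
Your route buys brevity and a uniform treatment of both distances. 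The paper's route shows that the results matching the Hamming coefficient need neither Chebotarev's theorem, nor Frobenius elements, nor Lemma~\ref{lem:affineline}. The lower bounds are the same Hamming-bound computation in both.

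One minor point: your remark that the finitely many $n<n_{q,d}$ can be absorbed into the constant fails literally at $n=1$. There $\sqrt{\log n}=0$ while $\rho_q(1,4)=1$. The remark is also unnecessary, since the $O$-terms are asymptotic as $n\to\infty$.
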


Some words on what these statements do and do not say. The optimality in Corollary~\ref{cor:fourfive} is optimality of the coefficient of $\log_q n$. The construction loses a factor $\exp(O_{q,d}(\sqrt{\log n}))=n^{o(1)}$ in the number of codewords, so it does not give redundancy $t\log_q n+O(1)$, positive packing density, or asymptotically perfect codes. At distance six, Theorem~\ref{thm:main} gives coefficient three where the Hamming bound has two, and as $d$ grows the gap between $d-3$ and the Hamming coefficient $\lfloor(d-1)/2\rfloor$ grows linearly.

\subsection{Why nonlinearity matters at distance four}
For a prime power $q$, let $\rho_q^{\rm lin}(n,d)$ be the smallest redundancy $n-k$ of an $\F_q$-linear code of length $n$ and minimum distance at least $d$, and let
\begin{equation}\label{eq:theta}
 \theta_q=\min_{0<x<1}\frac{1+x+\cdots+x^{q-1}}{x^{(q-1)/3}}.
\end{equation}
The Ellenberg--Gijswijt theorem \cite{EG} has the following consequence, proved in Section~\ref{sec:four}.

\begin{proposition}\label{prop:linear}
For every prime power $q>2$, one has $\theta_q<q$ and
\begin{equation}\label{eq:linear}
 \rho_q^{\rm lin}(n,4)\ge\frac{\log q}{\log\theta_q}\log_q n-O_q(1).
\end{equation}
For $q=3$ the coefficient is approximately $1.084026$.
\end{proposition}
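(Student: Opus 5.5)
The plan is to translate distance four into a cap-set condition on the columns of a parity-check matrix, and then apply the Ellenberg--Gijswijt bound in a form that holds for every prime power $q$.

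\textbf{Step 1: reduction to caps.} Let $C$ be an $\F_q$-linear $[n,k,\ge4]$ code with redundancy $r=n-k$, and let $H$ be an $r\times n$ parity-check matrix. Distance at least four means that no three columns of $H$ are linearly dependent. In particular no column is zero, no two columns are proportional, and no three are dependent. I would first normalize the columns so that they lie in an affine hyperplane: append a row of ones, or more precisely map each column $h$ to $(h,1)\in\F_q^{r+1}$. Then three columns $h_1,h_2,h_3$ satisfying $h_1+h_2+h_3=0$ would be dependent. Moreover $a+b+c=0$ with distinct $a,b,c$ among the lifted vectors forces the lifted ones to satisfy $1+1+1=0$, which is only relevant in characteristic $3$.

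The cleaner route is to use that the columns, as a set $A\subset\F_q^{r}$, contain no nontrivial solution of $x+y+z=0$ with $x,y,z$ distinct. Distinctness holds because columns are pairwise non-proportional, and any such solution is a linear dependence among three columns. So $A$ is a set with no three distinct elements summing to zero.

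\textbf{Step 2: slice rank bound.} The Ellenberg--Gijswijt/Tao slice-rank argument works over any finite field $\F_q$ for the equation $x+y+z=0$. The tensor $T(x,y,z)=\prod_{i}\bigl(1-(x_i+y_i+z_i)^{q-1}\bigr)$ restricted to $A^3$ is diagonal with nonzero diagonal entries: $T(x,x,x)=\prod_i(1-(3x_i)^{q-1})$. This is nonzero only if $3x=0$, which creates a difficulty when $3\ne0$ in $\F_q$. I would fix this by using the equation $x+y-2z=0$ (three-term progressions) instead, or equivalently $x+y+cz$ with $1+1+c=0$. Then the diagonal is identically $1$, and collinear triples $x,z,y$ with $y-z=z-x$ are dependent columns, which distance four forbids. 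So I would take the cap-free condition as: no distinct $x,y,z\in A$ with $x+y-2z=0$. For $q$ even, $-2=0$ and this degenerates to $x=y$. In that case I would use instead $x+y+z=0$ directly: in characteristic $2$ the diagonal is $T(x,x,x)=\prod(1-x_i^{q-1})$, which is handled by restricting to lifted vectors whose last coordinate is $1$, where $x+x+x=x\ne0$. This case split by characteristic (char $3$, char $2$, other) is where care is needed.

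The standard slice-rank computation bounds $|A|\le 3N$, where $N$ is the number of monomials in $r$ variables with each exponent at most $q-1$ and total degree at most $r(q-1)/3$. A Chernoff/Markov generating-function bound gives $N\le\bigl(\sum_{j<q}x^j\bigr)^r x^{-r(q-1)/3}$ for every $0<x<1$, hence $N\le\theta_q^{\,r}$.

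\textbf{Step 3: conclusion.} This gives $n\le|A|\le3\theta_q^{\,r+O(1)}$, so $r\ge\log n/\log\theta_q-O_q(1)=\frac{\log q}{\log\theta_q}\log_q n-O_q(1)$. For $\theta_q<q$, take $x\to1$ and differentiate: at $x=1$ the expression equals $q$ with derivative $q(q-1)/2-q(q-1)/3>0$, so slightly smaller $x$ gives a value below $q$. For the numerics at $q=3$, minimize $(1+x+x^2)/x^{2/3}$. Setting the derivative to zero gives $x^2-x/2... $, that is $4x^2+x-2=0$, so $x=(\sqrt{33}-1)/8$. Evaluating $\log3/\log\theta_3$ at this point gives approximately $1.084$.

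The main obstacle is Step 1/2: choosing the right additive equation so that dependent column triples are exactly excluded while the diagonal of the slice-rank tensor stays nonvanishing in every characteristic.
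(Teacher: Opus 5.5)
For odd $q$ your argument is correct and matches the paper's. The equation $x+y-2z=0$ has nonzero coefficients summing to zero, so the slice-rank diagonal is identically $1$. Three distinct columns cannot satisfy it by distance four, and a solution with two equal entries forces the third to be equal. This is the paper's argument with $(\alpha,\beta,\gamma)=(1,1,-2)$.

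\textbf{The gap is characteristic $2$}, that is, $q=4,8,\dots$, which the statement covers. Your fix there runs the wrong way. In characteristic $2$, $T(x,x,x)=\prod_i(1-x_i^{q-1})$ is nonzero only at $x=0$. On vectors with last coordinate $1$, every diagonal entry therefore vanishes. Worse, $x+y+z$ has last coordinate $1+1+1=1$ for all $x,y,z\in A$, so $T$ is identically zero on $A^3$ and the slice-rank bound gives nothing.

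This cannot be repaired while keeping the equation $x+y+z=0$. In characteristic $2$ its coefficients sum to $1\neq0$, so it is not translation invariant, and a whole affine hyperplane of $\F_q^{r}$ (with $r=n-k$ as in your notation) avoids it. A sanity check shows that something must use $q>2$, and your characteristic-$2$ reasoning never does. For $q=2$ the extended Hamming codes have distance four and $n=2^{r-1}$ columns. That far exceeds $3\theta_2^{\,r}$, where $\theta_2=3\cdot2^{-2/3}\approx1.89$, so your reasoning would prove a false bound.

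\textbf{The missing idea} is to use the full strength of distance four: no three columns satisfy a linear relation whose three coefficients are all nonzero. So choose any equation $\alpha x+\beta y+\gamma z=0$ with $\alpha,\beta,\gamma$ all nonzero and $\alpha+\beta+\gamma=0$. Such coefficients exist exactly when $q>2$. In characteristic $2$ take $(1,a,1+a)$ with $a\in\F_q\setminus\F_2$; this is precisely where $q=2$ fails.

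With this equation:
\begin{itemize}
\item the diagonal is identically $1$, because $\alpha+\beta+\gamma=0$;
\item three distinct columns are excluded by distance four;
\item a solution with two equal entries forces the third to match them, because the remaining coefficient is nonzero.
\end{itemize}
The Ellenberg--Gijswijt bound for general such $(\alpha,\beta,\gamma)$ (their Theorem~4, or the slice-rank computation you sketch, which works for any coefficients summing to zero) then gives $n\le3\theta_q^{\,r}$ uniformly for all prime powers $q>2$. This is exactly the paper's route. Your remaining steps are fine: the generating-function bound, the derivative check that $\theta_q<q$, and the $q=3$ computation.
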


Together with Corollary~\ref{cor:fourfive}, this shows that for every prime power $q>2$ nonlinear codes achieve a leading redundancy coefficient that no linear family can. The ternary case is the most transparent: the columns of a parity-check matrix of a distance-four code form a subset of $\F_3^{n-k}$ with no three-term arithmetic progression, and such sets have exponentially small density, whereas Behrend-type sets of integers lose only a subpolynomial factor.

\subsection{Prior work and what is new}
Extended BCH codes provide the classical benchmark. For a prime power $q$, their redundancy satisfies
\begin{equation}\label{eq:bch}
 \rho_q^{\rm lin}(n,d)\le
 \left\lceil\frac{(d-2)(q-1)}q\right\rceil\log_q n+O_{q,d}(1);
\end{equation}
see \cite[eq.~(3)]{YD}. For binary codes this meets the leading term of the Hamming bound. For alphabets large relative to the distance, namely $q\ge d-1$, the BCH coefficient is $d-2$, and Theorem~\ref{thm:main} lowers it to $d-3$. For smaller alphabets, BCH codes can match or improve on our bound. Combining the two constructions gives
\[
 c(q,d)\le\min\left\{d-3,\left\lceil\frac{(d-2)(q-1)}q\right\rceil\right\}
\]
for prime powers $q$.

Yekhanin and Dumer \cite{YD} developed the systematic study of the redundancy coefficient $c(q,d)$ and proved
\begin{equation}\label{eq:yd}
 c(q,d)\le(d-3)+\frac1{d-2}
\end{equation}
for all $q\ge2$ and $d\ge3$. For $d\ge4$, our result removes the additional $1/(d-2)$. At distances four and five this reaches the Hamming coefficient over every alphabet.

Some of these coefficients were already attainable in special cases. Dumer's earlier constructions \cite{Dumer} give coefficient $7/3$ at distance five and $3$ at distance six, so the distance-six coefficient in Theorem~\ref{thm:main} is not new. At distance five, ternary and quaternary linear codes already attain coefficient two \cite[Section~II]{YD}. Liu and Xing \cite{LX} construct nonlinear codes with redundancy $(d-2)\log_q n+O_{q,d}(1)$ and obtain finite-length improvements over linear codes.

Our contribution is a construction giving coefficient $d-3$ for every alphabet and every fixed distance $d\ge4$. It builds on the algebraic structure identified by Yekhanin and Dumer, but uses a geometric restriction in place of an additional check. The resulting codes are nonlinear; at distance four, Proposition~\ref{prop:linear} shows that this is essential to the improvement over linear codes. The next subsection explains the construction and the role of the geometry.

\subsection{Overview of the method}
The basic idea is to impose fewer checks, but to be more selective about the field elements used to label the code positions. The checks leave a restricted family of troublesome configurations. We then choose a large set of labels containing none of them.

\smallskip\noindent\textbf{One fewer moment check.}
Assign a distinct element $x_i$ of an auxiliary field $\F_{p^m}$ to each code position; these elements are the \emph{locators}. Read each symbol $u_i\in\{0,\ldots,q-1\}$ as an integer. In addition to a symbol-sum check, record the $d-3$ moments
\[
 \sum_i u_i x_i,\quad \sum_i u_i x_i^2,\quad\ldots,\quad
 \sum_i u_i x_i^{d-3}.
\]
Words with the same check values form a code. The auxiliary field need not have any relation to the alphabet, so this construction makes sense even when $q$ is not a prime power.

For sufficiently large characteristic, the usual Vandermonde argument shows that two distinct words with the same check values cannot differ in $d-2$ or fewer positions. To obtain distance $d$, we must also rule out differences in exactly $d-1$ positions. One more moment check would do this, but would cost another leading term of $\log_q n$ in the redundancy. Instead, we restrict the locators to a set $S\subseteq\F_{p^m}$ on which such a difference cannot occur.

\smallskip\noindent\textbf{What can still go wrong.}
A difference in $d-1$ positions that escapes the checks gives a short relation among the corresponding locators; we call it a \emph{dependency}. Yekhanin and Dumer \cite{YD} show that, for suitable choices of $p$ and $m$, all locators in such a dependency lie on one affine line over $\F_p$. Here we view $\F_{p^m}$ as an $m$-dimensional vector space over $\F_p$.

There is also strong control over the relative positions of these locators. After translating and scaling so that two of them become $0$ and $1$, the equations for a dependency leave only finitely many possible configurations of distinct locators. These normalized configurations are the \emph{shapes} studied later. For our purposes, it is enough to control just one ratio,
\[
 \lambda=\frac{x_3-x_1}{x_2-x_1}.
\]
The equations for the shapes restrict this ratio to roots modulo $p$ of finitely many fixed polynomials, depending only on $q$ and $d$. Thus the task is not to exclude arbitrary finite-field collinearity, but only triples with these particular ratios.

\smallskip\noindent\textbf{Why a sphere helps.}
Over the real numbers, there is a simple way to exclude all such triples at once: put the points on a sphere. A line meets a sphere in at most two points, so three distinct points of the sphere cannot satisfy $w=(1-\lambda)u+\lambda v$ for any real $\lambda$. Behrend's construction \cite{Behrend,EHPS} uses this observation to build large sets of integers containing no three-term arithmetic progression. We adapt the same idea to the ratios left by our checks.

The difficulty is that our locators live in a finite field, not in real space. A relation modulo $p$ need not have held before reduction, and a root of a polynomial modulo $p$ need not correspond to a real root of that polynomial. The number theory in the paper addresses precisely these two obstacles.

\smallskip\noindent\textbf{Bringing the real geometry into the finite field.}
First, we choose the auxiliary primes using Chebotarev's theorem so that every ratio arising from a dependency is the reduction of one of finitely many fixed real algebraic numbers. We call such a number a \emph{real lift} of the ratio. Only the ratio of one triple needs a real lift; we do not need to lift the entire configuration.

Next, we work in a fixed number field contained in $\R$: a finite extension of $\Q$ containing these coefficients. We build the sphere using algebraic integers as coordinates and keep them small enough that the relevant congruences must already be equalities before reduction. The proof is an algebraic version of the elementary observation that an integer divisible by $p$ and of absolute value less than $p$ must be zero. Consequently, a prohibited triple in the reduced set would lift to three distinct collinear points on the real sphere, which is impossible.

\smallskip\noindent\textbf{The cost of restricting the locators.}
As in Behrend's construction, the sphere can be chosen to contain many points. With suitable parameters, we obtain $n$ locators in a field of size $n\exp(O_{q,d}(\sqrt{\log n}))$. The $d-3$ moment checks therefore contribute the leading redundancy $(d-3)\log_q n$; the symbol-sum check and the restriction of the locator set together cost only $O_{q,d}(\sqrt{\log n})$. In this way, the geometric restriction replaces an entire moment check at only a lower-order cost.

At distance four, the argument can be carried out entirely over the integers. At distance five, the relevant ratios satisfy quadratic equations, and an elementary choice of primes suffices. A shorter proof of the distance-five bound would follow from a Sidon-type estimate recorded by Cilleruelo and Timmons \cite[p.~2]{CT}, but the required form of that estimate is attributed there to unpublished work and personal communication. We therefore give a direct proof that does not rely on it; see Remark~\ref{rem:sidon} for details. The number-field construction allows the same geometric idea to work for general distance.

\subsection{Organization}
Section~\ref{sec:four} establishes Theorem~\ref{thm:main} for $d=4$ and proves Proposition~\ref{prop:linear}. Section~\ref{sec:moments} sets up the moment checksums, records the finiteness of normalized configurations, and applies the Yekhanin--Dumer affine-line theorem. Section~\ref{sec:sphere} develops the sphere construction over a number field and reduces the main theorem to a statement about affine ratios that we call a localization datum. Section~\ref{sec:five} produces such a datum at distance five by elementary means. Section~\ref{sec:general} does so in general, using one-variable polynomial relations and Chebotarev's theorem. Section~\ref{sec:limitations} discusses limitations and questions.

\subsection{Acknowledgement and disclosure of AI use}
The author shared with GPT-6 (OpenAI) his research notes outlining possible approaches to the fixed-distance coding problem. Starting from those notes, GPT-6 developed the constructions and the proofs presented in the paper, and also assisted with the literature review, with checking the mathematics, and with preparing the manuscript. This was a substantive research contribution far beyond language editing. GPT's original approach has been simplified by the author. Claude Fable 5.1 (Anthropic) was used to revise the exposition and presentation. Sergey Yekhanin is the sole named author and takes responsibility for the final version.

\section{Distance four}\label{sec:four}
At distance four the construction is elementary, and it already shows the two ingredients of the general argument: a checksum that can be lifted from a congruence to an exact integer equality, and a locator set built from Behrend's sphere. Put $K=q-1$ and read the symbols of $\Sigma_q$ as integers. When two words differ in at most three positions, each nonzero coordinate difference is then an integer of absolute value at most $K$, and the following notion is designed for exactly those differences.

\begin{definition}\label{def:weighted}
A set $A\subseteq\Z$ is \emph{$K$-weighted-average-free} if, for all positive integers $a,b$ with $a+b\le K$, the only solutions of
\begin{equation}\label{eq:weighted}
 ax+by=(a+b)z,\qquad x,y,z\in A,
\end{equation}
are those with $x=y=z$. For $K=1$ the condition is vacuous.
\end{definition}

\subsection{The construction}
\begin{lemma}\label{lem:fourcode}
Let $A=\{a_1,\ldots,a_n\}\subseteq\{0,\ldots,N-1\}$ be a $K$-weighted-average-free set of $n$ distinct integers, and put $L=3K+1$ and $M=3KN+1$. Every fiber of the map
\begin{equation}\label{eq:fourchecks}
 \Phi(u)=\left(\sum_{i=1}^n u_i\bmod L,\ \sum_{i=1}^n a_i u_i\bmod M\right),
 \qquad u\in\Sigma_q^n,
\end{equation}
is a code of minimum distance at least four. Consequently
$A_q(n,4)\ge q^n/\bigl((3K+1)(3KN+1)\bigr)$.
\end{lemma}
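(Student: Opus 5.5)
Suppose $u\neq v$ lie in the same fiber of $\Phi$. Put $e=u-v\in\Z^n$; then every $e_i\in\{-K,\dots,K\}$. Let $w$ be the number of nonzero $e_i$. We will show that $w\ge4$. The counting bound then follows from pigeonhole: $\Phi$ takes at most $LM=(3K+1)(3KN+1)$ values, so some fiber contains at least $q^n/(LM)$ words.

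\textbf{Lifting the congruences to equalities.} If $w\le3$, then $|\sum_i e_i|\le 3K<L$. The first check gives $\sum_i e_i\equiv0\pmod L$, so in fact $\sum_i e_i=0$ in $\Z$. Likewise $0\le a_i\le N-1$ gives $|\sum_i a_ie_i|\le 3K(N-1)<M$, so $\sum_i a_ie_i=0$ exactly. We are therefore left with two integer relations on a support of size at most three.

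\textbf{Case analysis on $w$.}
\begin{itemize}
\item $w=1$ is impossible, because $\sum e_i=0$ forces the single nonzero entry to vanish.
\item If $w=2$, then $e_j=-e_i\ne0$, and the second relation gives $e_i(a_i-a_j)=0$. This contradicts $a_i\ne a_j$.
\item If $w=3$, the three nonzero entries sum to zero, so two of them share a sign and the third has the opposite sign. Multiplying by $-1$ if needed, write the entries as $a>0$, $b>0$ and $-(a+b)$ at positions $i,j,k$. Since $|e_k|\le K$, we have $a+b\le K$. The second relation then reads $a\,a_i+b\,a_j=(a+b)a_k$.
\end{itemize}
In the case $w=3$, Definition~\ref{def:weighted} forces $a_i=a_j=a_k$. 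This contradicts distinctness, since $i,j,k$ are distinct positions. Hence $w\ge4$.

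The only delicate point is choosing the moduli large enough for the lifting step, and the values $L=3K+1$ and $M=3KN+1$ are exactly enough for that. The remaining work is the sign bookkeeping that puts the $w=3$ relation into the form required by the weighted-average-free condition. For $K=1$ that case cannot occur at all, since three entries of $\pm1$ cannot sum to zero.
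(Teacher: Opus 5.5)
Your proof is correct and follows the paper's argument essentially step for step. You lift both congruences to integer equalities using $|\sum_i e_i|\le 3K<L$ and $|\sum_i a_ie_i|\le 3K(N-1)<M$, then do the case analysis $w=1,2,3$ with the sign normalization to $(a,b,-(a+b))$ in the last case, and finish with the pigeonhole count over at most $LM$ fibers.
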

\begin{proof}
Suppose two distinct words in one fiber differ on a support of size $w\le3$, with coordinate differences $\delta_1,\ldots,\delta_w$ at locators $x_1,\ldots,x_w\in A$. Then $\sum_i\delta_i\equiv0\pmod L$ and $\sum_i\delta_i x_i\equiv0\pmod M$. Since $|\sum_i\delta_i|\le3K<L$ and $|\sum_i\delta_i x_i|\le3K(N-1)<M$, both congruences are equalities in $\Z$. For $w=1$ the first is impossible. For $w=2$ the two together give $\delta_1(x_1-x_2)=0$, so $x_1=x_2$. For $w=3$, after reordering and possibly changing all signs, the coefficients are $(a,b,-a-b)$ with $a,b>0$ and $a+b\le K$, and the second equality is a relation \eqref{eq:weighted} on three distinct elements of $A$, which does not exist. The number of fibers is at most $LM$.
\end{proof}

The modulus $L$ is a constant because it only has to dominate the symbol sum of a short difference of two words, not of a whole word; this is why the first checksum costs only $O_q(1)$ redundancy. The code has redundancy $\log_q N+O_q(1)$, and everything depends on how small $N$ can be made for a given $n$.

\begin{lemma}\label{lem:behrend}
For integers $K\ge1$, $h\ge2$ and $m\ge1$ there is a $K$-weighted-average-free set $A\subseteq\{0,\ldots,(Kh)^m-1\}$ with
\[
 |A|\ge\frac{h^m}{m(h-1)^2+1}\ge\frac{h^{m-2}}m.
\]
\end{lemma}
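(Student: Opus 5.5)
Behrend's sphere construction in base $Kh$ gives the set. Write integers in $\{0,\ldots,(Kh)^m-1\}$ in base $Kh$ with $m$ digits. Let $D=\{x:\text{every base-}Kh\text{ digit of }x\text{ lies in }\{0,\ldots,h-1\}\}$, so $|D|=h^m$. For $x\in D$ with digit vector $(x_0,\ldots,x_{m-1})\in\{0,\ldots,h-1\}^m$, put $s(x)=\sum_j x_j^2$. Then $s(x)\in\{0,1,\ldots,m(h-1)^2\}$, which takes $m(h-1)^2+1$ values. By pigeonhole some level set $A=\{x\in D:s(x)=r\}$ has $|A|\ge h^m/(m(h-1)^2+1)$. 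The second inequality $h^m/(m(h-1)^2+1)\ge h^{m-2}/m$ follows from $m(h-1)^2+1\le mh^2$, which holds since $m(2h-1)\ge1$.

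Next I show that $A$ is $K$-weighted-average-free.

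\textbf{No carries.} Suppose $ax+by=(a+b)z$ with $x,y,z\in A$, $a,b\ge1$ and $a+b\le K$. Digit by digit, $ax_j+by_j\le(a+b)(h-1)\le K(h-1)<Kh$, and likewise $(a+b)z_j<Kh$. So both sides are computed in base $Kh$ without carries. Uniqueness of base-$Kh$ representation then gives $ax_j+by_j=(a+b)z_j$ for every $j$. In vector form,
\[
 z=\tfrac{a}{a+b}\,x+\tfrac{b}{a+b}\,y
\]
as points of $\R^m$, so $z$ is a convex combination of $x$ and $y$ with weights in $(0,1)$.

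\textbf{Strict convexity.} All three vectors lie on the sphere $\|v\|^2=r$. If $x\ne y$, strict convexity of $\|\cdot\|^2$ gives $\|z\|^2<\frac{a}{a+b}\|x\|^2+\frac{b}{a+b}\|y\|^2=r$, a contradiction. Hence $x=y$, and then $z=x$. Distinct integers in $D$ have distinct digit vectors, so this yields $x=y=z$ as integers.

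The only delicate point is the no-carry step: choosing base $Kh$ with digits below $h$ is exactly what makes the weighted sums stay digitwise below the base. Everything else is the standard Behrend argument, and the case $K=1$ needs no separate treatment.
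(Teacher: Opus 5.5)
Your proposal is correct and follows essentially the same route as the paper: pigeonhole on the squared norm over the grid $\{0,\ldots,h-1\}^m$, then encoding in base $Kh$ so that weighted sums have no carries, then ruling out a proper convex combination of three points on a sphere. The only cosmetic difference is that you invoke strict convexity of $\|\cdot\|^2$, whereas the paper writes out the equivalent identity $\|\lambda u+(1-\lambda)v\|^2=\lambda\|u\|^2+(1-\lambda)\|v\|^2-\lambda(1-\lambda)\|u-v\|^2$.
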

\begin{proof}
The construction is Behrend's: points on a sphere, written out in a large base. We go through it in four steps.

\smallskip\noindent\textbf{A large sphere.}
For a point $v=(v_0,\ldots,v_{m-1})$ of the grid $\{0,\ldots,h-1\}^m$, the squared Euclidean norm $\|v\|^2=\sum_j v_j^2$ is an integer between $0$ and $m(h-1)^2$, so it takes at most $m(h-1)^2+1$ values. By the pigeonhole principle some value $\rho^2$ is taken by at least $h^m/(m(h-1)^2+1)$ grid points. Let $T$ be the set of grid points with $\|v\|^2=\rho^2$.

\smallskip\noindent\textbf{Encoding.}
Put $B=Kh$ and $f(v)=\sum_{j=0}^{m-1}v_jB^j$. Since every coordinate satisfies $0\le v_j\le h-1<B$, this is the base-$B$ expansion of $f(v)$; hence $f$ is injective and $0\le f(v)\le B^m-1=(Kh)^m-1$. Let $A=f(T)$, so that $|A|=|T|$.

\smallskip\noindent\textbf{No carries.}
Suppose $af(u)+bf(v)=(a+b)f(w)$ with $u,v,w\in T$ and positive integers $a,b$ with $a+b\le K$. Both sides are sums of powers of $B$,
\[
 af(u)+bf(v)=\sum_{j=0}^{m-1}(au_j+bv_j)B^j,\qquad
 (a+b)f(w)=\sum_{j=0}^{m-1}(a+b)w_jB^j,
\]
whose coefficients $au_j+bv_j$ and $(a+b)w_j$ are nonnegative integers at most $(a+b)(h-1)\le K(h-1)<B$. Base-$B$ expansions with digits in $\{0,\ldots,B-1\}$ are unique, so the coefficients agree for every $j$: $au+bv=(a+b)w$ as vectors. Dividing by $a+b$ and setting $\lambda=a/(a+b)\in(0,1)$,
\[
 w=\lambda u+(1-\lambda)v.
\]

\smallskip\noindent\textbf{Three points of a sphere are not collinear.}
Expanding the square,
\[
 \|\lambda u+(1-\lambda)v\|^2
 =\lambda^2\|u\|^2+2\lambda(1-\lambda)\langle u,v\rangle+(1-\lambda)^2\|v\|^2,
\]
and substituting $2\langle u,v\rangle=\|u\|^2+\|v\|^2-\|u-v\|^2$ gives
\begin{equation}\label{eq:sphereidentity}
 \|\lambda u+(1-\lambda)v\|^2
 =\lambda\|u\|^2+(1-\lambda)\|v\|^2-\lambda(1-\lambda)\|u-v\|^2.
\end{equation}
Apply this with $w=\lambda u+(1-\lambda)v$ and $\|u\|^2=\|v\|^2=\|w\|^2=\rho^2$: the left side is $\rho^2$ and the first two terms on the right add up to $\rho^2$, so $\lambda(1-\lambda)\|u-v\|^2=0$. As $\lambda(1-\lambda)>0$, this forces $u=v$, and then $w=u$ as well. Thus the only solutions of \eqref{eq:weighted} in $A$ have $x=y=z$, and $A$ is $K$-weighted-average-free.

Finally, $m(h-1)^2+1\le mh^2$ because $(h-1)^2+1\le h^2$ for $h\ge1$, which gives the second inequality in the statement.
\end{proof}

\begin{lemma}\label{lem:behrendsize}
For fixed $K\ge1$ and every $n\ge2$ there is a $K$-weighted-average-free set of $n$ integers in $\{0,\ldots,N-1\}$ with $\log N\le\log n+O_K(\sqrt{\log n})$.
\end{lemma}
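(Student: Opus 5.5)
We will derive the statement from Lemma~\ref{lem:behrend} by choosing the parameters $h$ and $m$ as functions of $n$. Two easy facts are used throughout. First, a subset of a $K$-weighted-average-free set is again $K$-weighted-average-free. Second, if $A\subseteq\{0,\ldots,N'-1\}$ with $N'\le N$, then also $A\subseteq\{0,\ldots,N-1\}$. So it suffices to find a weighted-average-free set of at least $n$ elements inside $\{0,\ldots,(Kh)^m-1\}$ with $m\log(Kh)\le\log n+O_K(\sqrt{\log n})$, and then keep any $n$ of its elements.

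\textbf{Choice of parameters.} Put $\ell=\log n$ and take $m=\lceil\sqrt{\ell}\,\rceil+2$. Let $h$ be the least integer $h\ge2$ with $h^{m-2}\ge mn$. Lemma~\ref{lem:behrend} then gives $|A|\ge h^{m-2}/m\ge n$, so enough elements are available. For the size bound, minimality of $h$ gives one of two cases. Either $h=2$, or $(h-1)^{m-2}<mn$. In the second case $h-1<(mn)^{1/(m-2)}$, so
\[
 \log h\le\frac{\log(mn)}{m-2}+\log\frac{h}{h-1}\le\frac{\ell+\log m}{m-2}+\log 2 .
\]
The case $h=2$ satisfies the same bound trivially.

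\textbf{Estimating $\log N$.} Set $N=(Kh)^m$. Then
\[
 \log N=m\log K+m\log h\le m\log K+\frac{m}{m-2}(\ell+\log m)+m\log2 .
\]
With $m=\Theta(\sqrt{\ell})$ we have $m\log K+m\log2=O_K(\sqrt\ell)$ and $\log m=O(\log\ell)$. For the main term, $\frac{m}{m-2}\ell=\ell+\frac{2\ell}{m-2}\le\ell+2\sqrt{\ell}$ by the choice $m-2\ge\sqrt\ell$. Adding these up gives $\log N\le\ell+O_K(\sqrt\ell)$, which is the claim.

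\textbf{Small $n$ and the main obstacle.} For small $n$, say $\ell<1$, the $O_K$ constant absorbs everything, since $N$ is then bounded in terms of $K$. The only delicate step is the balancing. The loss $m\log(Kh/h)=m\log K$ from the base blow-up $B=Kh$ grows like $m$. The pigeonhole loss over the $\approx mh^2$ sphere radii, visible as the exponent $m-2$ in place of $m$, costs about $2\ell/m$. Taking $m\asymp\sqrt{\ell}$ makes both losses $O(\sqrt\ell)$. We also need the rounding of $h$ up to an integer to cost only $O(m)=O(\sqrt\ell)$ in $\log N$, and that is exactly the $\log(h/(h-1))\le\log 2$ term above.
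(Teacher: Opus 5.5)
Your proof is correct and follows the paper's route: apply Lemma~\ref{lem:behrend} with $m\asymp\sqrt{\log n}$, so that the base blow-up $m\log K$ and the pigeonhole loss from the exponent $m-2$ are both $O_K(\sqrt{\log n})$. The difference is only in how $h$ is chosen. The paper takes $h=2^m$ with $m=\lceil\sqrt{\log_2 n}\,\rceil+2$, so that $\log_2 N=m^2+m\log_2 K$ can be read off directly. You instead take the least $h$ with $h^{m-2}\ge mn$ and absorb the integer rounding of $h$ as the $m\log 2$ term.
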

\begin{proof}
Let $t=\lceil\sqrt{\log_2 n}\rceil\ge1$, $m=t+2$, $h=2^m$ and $N=(Kh)^m$. Lemma~\ref{lem:behrend} gives at least $2^{m(m-2)}/m=2^{t^2+2t}/(t+2)\ge2^{t^2}\ge n$ locators, because $4^t\ge t+2$; keep any $n$ of them. Finally
\[
 \log_2 N=m^2+m\log_2 K=\log_2 n+O_K(\sqrt{\log n}).\qedhere
\]
\end{proof}

\begin{proof}[Proof of Corollary~\ref{cor:fourfive} at distance four]
Lemmas~\ref{lem:fourcode} and~\ref{lem:behrendsize} give codes with redundancy at most
$\log_q N+O_q(1)=\log_q n+O_q(\sqrt{\log n})$, and \eqref{eq:hamming} with $t=1$ gives $\rho_q(n,4)\ge\log_q n-O_q(1)$.
\end{proof}

Two remarks. For $q=2$ the weighted-average-free condition is empty and consecutive integer locators already give redundancy $\log_2 n+O(1)$, so the sphere is needed only for larger alphabets. And even when $q$ is a prime power, the map \eqref{eq:fourchecks} is not an $\F_q$-linear syndrome: it adds integer representatives and reduces modulo two moduli that have nothing to do with $q$. The next subsection shows that this nonlinearity is unavoidable.

\subsection{A linear obstruction}
Let $q$ be a prime power and let $H$ be a full-rank $R\times n$ parity-check matrix of an $\F_q$-linear code of minimum distance at least four. Its columns $h_1,\ldots,h_n$ are distinct and nonzero, and any three of them are linearly independent. Over $\F_3$ this says in particular that no three distinct columns satisfy $h_i+h_j+h_k=0$, which is the equation of a three-term arithmetic progression; since two equal columns in such a relation force the third to equal them, the set of columns is a cap set in $\F_3^R$. Meshulam's bound $O(3^R/R)$ \cite{Meshulam} and the Bateman--Katz bound $O(3^R/R^{1+\varepsilon})$ \cite{BK} already show that $\rho_3^{\rm lin}(n,4)-\log_3 n$ is unbounded, but they are compatible with leading coefficient one. The polynomial-method breakthrough of Croot, Lev, and Pach~\cite{CLP} established an exponential bound for progression-free subsets of $(\mathbb Z/4\mathbb Z)^R$. The subsequent result of Ellenberg and Gijswijt~\cite{EG} rules out leading coefficient one for linear codes over every prime-power alphabet $q>2$.

\begin{proof}[Proof of Proposition~\ref{prop:linear}]
Ellenberg and Gijswijt \cite[Theorem~4]{EG} prove the following. If $\alpha,\beta,\gamma\in\F_q$ satisfy $\alpha+\beta+\gamma=0$ and $\gamma\ne0$, and $S\subseteq\F_q^R$ contains no solution of $\alpha x+\beta y+\gamma z=0$ other than $x=y=z$, then
\begin{equation}\label{eq:eg}
 |S|\le3\,\#\left\{(e_1,\ldots,e_R)\in\{0,\ldots,q-1\}^R:
 \sum_i e_i\le(q-1)R/3\right\}.
\end{equation}
For $0<x<1$, each tuple counted on the right satisfies $x^{\sum_i e_i}\ge x^{(q-1)R/3}$, so the count is at most
\[
 x^{-(q-1)R/3}\sum_e x^{\sum_i e_i}
 =x^{-(q-1)R/3}(1+x+\cdots+x^{q-1})^R.
\]
Minimizing over $x$ gives $|S|\le3\theta_q^R$.

Let $S$ be the set of columns of $H$. As $q>2$ we can choose $\alpha,\beta,\gamma$ all nonzero with zero sum, for instance $(1,a,-1-a)$ with $a\notin\{0,-1\}$. A solution on three distinct columns would contradict the minimum distance, and a solution with exactly two equal columns forces the third to equal them, because the remaining coefficient is nonzero. Hence $n\le3\theta_q^R$, which is \eqref{eq:linear}.

It remains to see that $\theta_q<q$. The function minimized in \eqref{eq:theta} equals $q$ at $x=1$, and the derivative of its logarithm there is $(q-1)/2-(q-1)/3=(q-1)/6>0$, so the function is smaller than $q$ just to the left of $1$. For $q=3$ the minimum is attained at $x=(\sqrt{33}-1)/8$, giving $\theta_3=2.755104613\ldots$ and $\log3/\log\theta_3=1.084026266\ldots$.
\end{proof}

Combined with the construction, Proposition~\ref{prop:linear} gives
\[
 \rho_q^{\rm lin}(n,4)-\rho_q(n,4)
 \ge\left(\frac{\log q}{\log\theta_q}-1\right)\log_q n-O_q(\sqrt{\log n})
\]
for every prime power $q>2$.

\section{Moment checksums and affine-line structure}\label{sec:moments}
For distances five and beyond we use an auxiliary finite field. Fix $q$ and $d$ and put
\[
 K=q-1,\qquad s=d-1,\qquad r=d-3.
\]
Let $p>sK$ be a prime, let $m\ge1$, and let $F=\F_{p^m}$.

\begin{definition}\label{def:dependency}
A \emph{dependency} on distinct elements $x_1,\ldots,x_w\in F$, where $1\le w\le s$, is a relation
\begin{equation}\label{eq:dependency}
 \sum_{i=1}^w\delta_i=0\quad\text{in }\Z,\qquad
 \sum_{i=1}^w\delta_i x_i^j=0\quad\text{in }F\quad(1\le j\le r),
\end{equation}
with integers $\delta_i$ satisfying $0<|\delta_i|\le K$, reduced modulo $p$ in the field equations. A set $S\subseteq F$ is \emph{admissible} if there is no dependency on elements of $S$.
\end{definition}

\begin{lemma}\label{lem:momentcode}
If $S\subseteq F$ is admissible and $n\le|S|$, there is a $q$-ary code of length $n$, minimum distance at least $d$ and cardinality at least $q^n/p^{rm+1}$.
\end{lemma}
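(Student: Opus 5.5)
The plan is to copy the proof of Lemma~\ref{lem:fourcode}: take the code to be a fiber of a checksum map and count fibers. Choose distinct locators $x_1,\ldots,x_n\in S$, which is possible since $n\le|S|$. Read each symbol $u_i\in\Sigma_q$ as an integer in $\{0,\ldots,K\}$, and define
\[
 \Phi(u)=\Bigl(\sum_{i=1}^n u_i \bmod p,\ \sum_{i=1}^n u_i x_i,\ \sum_{i=1}^n u_i x_i^2,\ \ldots,\ \sum_{i=1}^n u_i x_i^{r}\Bigr)\in \Z/p\Z\times F^{r},
\]
where the integers $u_i$ are reduced modulo $p$ in the field sums. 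The target has $p\cdot p^{rm}\le p^{rm+1}$ elements. By pigeonhole some fiber has at least $q^n/p^{rm+1}$ words, and it remains to show that every fiber has minimum distance at least $d$.

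Suppose $u\ne v$ lie in one fiber and differ exactly on a support of size $w$ with $1\le w\le d-1=s$. Put $\delta_i=u_i-v_i$ for $i$ in the support, so that $0<|\delta_i|\le K$. Subtracting the two check values gives $\sum\delta_i x_i^j=0$ in $F$ for $1\le j\le r$, and $\sum\delta_i\equiv0\pmod p$. Since $|\sum\delta_i|\le wK\le sK<p$, the congruence is an equality in $\Z$; this is the one place where the hypothesis $p>sK$ is used. The locators on the support are distinct elements of $S$, so we have a dependency in the sense of Definition~\ref{def:dependency}. This contradicts admissibility of $S$, hence $w\ge d$.

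I do not expect a real obstacle here. Definition~\ref{def:dependency} was set up to cover every support size $w\le s$, so nothing needs checking beyond the integer lifting of the symbol sum. The case $w=1$ is already ruled out by the sum condition, but admissibility excludes it anyway. The only point needing care is the bookkeeping: the first coordinate contributes a factor $p$ and the $r$ field moments contribute $p^{rm}$, which gives the stated bound $q^n/p^{rm+1}$.
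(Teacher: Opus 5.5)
Your proposal is correct and follows the paper's proof essentially verbatim: you partition $\Sigma_q^n$ by the fibers of the same checksum map into $\F_p\times F^r$, use $p>sK$ to lift the symbol-sum congruence to an equality in $\Z$, and obtain a dependency on elements of $S$, contradicting admissibility. The count of at most $p^{rm+1}$ fibers then gives the cardinality bound by pigeonhole, exactly as in the paper.
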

\begin{proof}
Choose distinct $x_1,\ldots,x_n\in S$ and partition $\Sigma_q^n$ according to the value of
\begin{equation}\label{eq:momentchecks}
 \Phi(u)=\left(\sum_i u_i\bmod p,\ \sum_i u_i x_i,\ldots,\sum_i u_i x_i^r\right)
 \in\F_p\times F^r.
\end{equation}
If two words in the same fiber differ on a support of size $w\le s$, their coordinate differences $\delta_i$ are nonzero integers of absolute value at most $K$ with $\sum_i\delta_i\equiv0\pmod p$. As $|\sum_i\delta_i|\le sK<p$, the sum is zero in $\Z$, and the moment checks give the field equations in \eqref{eq:dependency}. This is a dependency on elements of $S$, which does not exist. So every fiber has minimum distance at least $d$, and there are at most $p^{rm+1}$ fibers.
\end{proof}

The symbol sum could equally be taken modulo the constant $sK+1$; using $p$ keeps the notation uniform, and the resulting $\log p$ disappears into the lower-order term.

\begin{lemma}\label{lem:vandermonde}
Every dependency involves at least $r+2=d-1$ locators. Moreover, if $x_1,\ldots,x_w$ carry a dependency with coefficients $\delta_1,\ldots,\delta_w$, then so do $a+bx_1,\ldots,a+bx_w$ for any $a,b\in F$ with $b\ne0$.
\end{lemma}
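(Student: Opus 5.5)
Both claims follow from the dependency equations \eqref{eq:dependency} by linear algebra over $F$, with one non-degeneracy fact used in each.

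\textbf{Lower bound on $w$.} Suppose $x_1,\ldots,x_w$ carry a dependency with $w\le r+1$. The equations with $j=0,1,\ldots,r$ hold in $F$; for $j=0$ we reduce the integer identity $\sum_i\delta_i=0$ modulo $p$. Keep only the first $w$ of them, $j=0,\ldots,w-1$, which is possible because $w-1\le r$.

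These say that the vector $(\bar\delta_1,\ldots,\bar\delta_w)$ of reductions modulo $p$ lies in the kernel of the $w\times w$ Vandermonde matrix $(x_i^j)$. Since the $x_i$ are distinct, this matrix is invertible. Hence every $\bar\delta_i=0$ in $\F_p$, which means $p\mid\delta_i$.

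But $0<|\delta_i|\le K<sK<p$, a contradiction. So $w\ge r+2=d-1$. The case $w=1$ is covered as well, since $\delta_1=0$ in $\Z$ is already excluded.

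\textbf{Affine invariance.} The points $y_i=a+bx_i$ are distinct because $b\ne0$, and the integer condition $\sum_i\delta_i=0$ is unchanged. For $1\le j\le r$, expand by the binomial theorem:
\[
 \sum_i\delta_i y_i^j=\sum_{k=0}^{j}\binom jk a^{j-k}b^k\sum_i\delta_i x_i^k .
\]
Each inner sum vanishes in $F$: for $k=0$ this is the reduction of $\sum_i\delta_i=0$, and for $1\le k\le j\le r$ it is one of the given moment equations. So the same coefficients $\delta_i$ give a dependency on the $y_i$.

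The only point needing care, and it is minor, is that the $j=0$ equation is used in $F$ only as a reduction of an integer identity. For the Vandermonde step, the hypothesis $p>sK\ge K$ is exactly what turns the vanishing of the $\bar\delta_i$ into a contradiction.
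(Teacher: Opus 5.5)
Your proposal is correct and follows the paper's proof essentially verbatim: the $w\times w$ Vandermonde system from the equations $j=0,\ldots,w-1$ forces every $\bar\delta_i=0$, which contradicts $0<|\delta_i|\le K<p$, and the binomial expansion of $(a+bx_i)^j$ gives affine invariance. Your added remarks, that the $y_i$ stay distinct because $b\ne0$ and that the $j=0$ equation is the reduction of the integer identity, are the right small details to make explicit.
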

\begin{proof}
The zeroth relation and the $r$ moment relations say together that $\sum_i\delta_i x_i^j=0$ in $F$ for $0\le j\le r$. If $w\le r+1$, the equations with $0\le j\le w-1$ form a linear system in the unknowns $\delta_1,\ldots,\delta_w\in\F_p$ whose matrix is the Vandermonde matrix of the distinct $x_i$. It is invertible, so all $\delta_i$ vanish modulo $p$, which is impossible since $0<|\delta_i|\le K<p$. For the second assertion expand $(a+bx_i)^j$ by the binomial theorem: $\sum_i\delta_i(a+bx_i)^j$ is a linear combination of the sums $\sum_i\delta_i x_i^k$, $0\le k\le j$, all of which vanish.
\end{proof}

So the only dependencies that can occur are those on exactly $r+2$ locators, and they are invariant under affine changes of variable. Given such a dependency, apply $x\mapsto(x-x_1)/(x_2-x_1)$ to send $x_1$ to $0$ and $x_2$ to $1$; the remaining locators become $y_i=(x_{i+2}-x_1)/(x_2-x_1)$ for $1\le i\le r$. Let $\cD$ be the finite set of coefficient tuples $\delta=(\delta_1,\ldots,\delta_{r+2})$ of nonzero integers with $|\delta_i|\le K$ and $\sum_i\delta_i=0$. The normalized dependency says that $(y_1,\ldots,y_r)$ solves the system
\begin{equation}\label{eq:shapes}
 f_j(y_1,\ldots,y_r):=\delta_2+\sum_{i=1}^r\delta_{i+2}y_i^j=0,
 \qquad 1\le j\le r,
\end{equation}
and that $0,1,y_1,\ldots,y_r$ are pairwise distinct. We call such a solution, over any field, a \emph{shape} for $\delta$.

\begin{lemma}\label{lem:finiteshapes}
Let $\delta\in\cD$, and let $k$ be an algebraically closed field of characteristic zero or of characteristic $p>\max(r,K)$. Then every shape for $\delta$ over $k$ is an isolated nonsingular solution of \eqref{eq:shapes}. There are at most $r!$ such shapes. In characteristic zero, all their coordinates are algebraic numbers.
\end{lemma}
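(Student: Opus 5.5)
The plan is to compute the Jacobian of the system \eqref{eq:shapes} at a shape, show it is invertible, and then obtain the count and the algebraicity from standard intersection theory.

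\textbf{Nonsingularity.} For $1\le i,j\le r$ we have $\partial f_j/\partial y_i=j\,\delta_{i+2}\,y_i^{j-1}$. Pull the factor $j$ out of row $j$ and the factor $\delta_{i+2}$ out of column $i$. What remains is the Vandermonde matrix $(y_i^{j-1})$, so
\[
 \det\Bigl(\frac{\partial f_j}{\partial y_i}\Bigr)=r!\,\prod_{i=1}^r\delta_{i+2}\prod_{1\le i<l\le r}(y_l-y_i).
\]
Each factor is nonzero in $k$:
\begin{itemize}[nosep]
\item $r!\ne0$ because the characteristic is zero or exceeds $r$;
\item each $\delta_{i+2}\ne0$ because $0<|\delta_{i+2}|\le K<p$ (or because the characteristic is zero);
\item the Vandermonde factor is nonzero because a shape has pairwise distinct $y_i$.
\end{itemize}
So the Jacobian is invertible at every shape. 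By the Jacobian criterion, the local ring of the scheme $V(f_1,\ldots,f_r)\subseteq\mathbb A^r_k$ at a shape has zero-dimensional tangent space, so the shape is a reduced isolated point. This is the first assertion. The only conditions of a shape used here are $y_i\ne y_l$; the conditions $y_i\ne0,1$ are not needed.

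\textbf{The bound $r!$.} The equations $f_1,\dots,f_r$ have degrees $1,2,\dots,r$. I would invoke the refined B\'ezout theorem (Fulton, \emph{Intersection Theory}, Example~8.4.6). It says that the number of isolated points of $V(f_1,\ldots,f_r)$ in $\mathbb A^r$, counted with multiplicity, is at most $\prod_j\deg f_j=r!$. Alternatively, homogenize to $\mathbb P^r$ and note that isolated affine points are distinguished components of the intersection. Shapes are isolated with multiplicity one, so there are at most $r!$ of them.

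\textbf{Algebraicity in characteristic zero.} First take $k=\overline{\Q}$. The system has integer coefficients, so its finitely many isolated points over $\C$ are the zero-dimensional irreducible components of $V(f_1,\ldots,f_r)$. Irreducible components are defined over $\overline{\Q}$, since the primary decomposition of an ideal generated by rational polynomials can be computed over $\overline{\Q}$. A zero-dimensional component defined over $\overline{\Q}$ is a single point with coordinates in $\overline{\Q}$. For a general algebraically closed $k$ of characteristic zero, $\overline{\Q}\subseteq k$, and the isolated points over $k$ are exactly those over $\overline{\Q}$. Indeed, base change from $\overline{\Q}$ to $k$ preserves the irreducible components, so no new isolated points appear. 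Hence every shape over $k$ has algebraic coordinates.

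The main point requiring care is the B\'ezout step. The affine variety may also have positive-dimensional components, for instance families on which some $y_i$ coincide. So the version of B\'ezout used must bound only the isolated points, not the whole intersection. The refined B\'ezout inequality does exactly this, and it holds over any algebraically closed field, so it covers both characteristics at once.
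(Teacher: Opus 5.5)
\textbf{Verdict: correct, with a different route for the algebraicity step.} Your Jacobian computation and the refined B\'ezout count are exactly what the paper does. The paper's ``isolated-point form of B\'ezout's bound'' is the Fulton inequality you cite. You are also right that only $y_i\ne y_l$ is needed for nonsingularity.

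\textbf{How the paper proves algebraicity.} It adjoins a variable $z$ with $z\Delta(y)=1$. The zero set of $I=(f_1,\ldots,f_r,z\Delta-1)$ is then exactly the finite set of shapes. The finiteness theorem makes $\C[y,z]/I$ finite-dimensional. Since $I$ is generated over $\Q$, the linear dependence among $1,y_i,y_i^2,\ldots$ modulo $I$ can be chosen with rational coefficients. This gives a nonzero $P\in\Q[T]$ with $P(y_i)\in I$.

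\textbf{How your route differs.} You argue that irreducible components of a variety defined over $\overline{\Q}$ stay irreducible after base change to any algebraically closed $k$. Hence zero-dimensional components are $\overline{\Q}$-points. This is correct, and it handles arbitrary algebraically closed $k$ directly. However, the fact doing the work is geometric irreducibility: prime ideals of $\overline{\Q}[y]$ remain prime in $k[y]$, with the same dimension. It is not merely that a primary decomposition ``can be computed over $\overline{\Q}$'', and you should state it that way. Also, the passage that begins ``First take $k=\overline{\Q}$'' and then speaks of isolated points over $\C$ is muddled, though the intended logic is clear.

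\textbf{What each route buys.} The paper's more elementary route produces an explicit ideal-membership certificate $P(y_1)\in I$ over $\Q$. After clearing denominators this is identity \eqref{eq:certificate}. Lemma~\ref{lem:polynomialrelation} reduces that identity modulo $p$ to constrain shapes in positive characteristic. Your argument proves the present lemma but does not supply such a certificate, so that later step would still need the paper's argument.
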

\begin{proof}
This is the normalized-moment calculation of \cite[Lemmas~8--9]{YD}, which also works in characteristic zero. The Jacobian matrix has entries $\partial f_j/\partial y_i=j\delta_{i+2}y_i^{j-1}$. Factoring $j$ out of row $j$ and $\delta_{i+2}$ out of column $i$ leaves a Vandermonde matrix, so
\begin{equation}\label{eq:jacobian}
 \det\left(\frac{\partial f_j}{\partial y_i}\right)_{j,i=1}^r
 =r!\left(\prod_{i=1}^r\delta_{i+2}\right)
 \prod_{1\le i<k\le r}(y_k-y_i).
\end{equation}
This is nonzero at every shape. Hence each shape is isolated and nonsingular. The isolated-point form of B\'ezout's bound gives at most $\prod_{j=1}^r\deg f_j=r!$ such solutions.

For algebraicity, put
\begin{equation}\label{eq:Delta}
 \Delta(y)=\prod_{i=1}^r y_i(y_i-1)
 \prod_{1\le i<j\le r}(y_i-y_j).
\end{equation}
Adjoining a variable $z$ and the equation $z\Delta(y)=1$ gives a polynomial system over $\Q$ whose complex solutions are precisely the shapes, with $z=1/\Delta(y)$. They form a finite set. By the standard finiteness theorem for polynomial ideals \cite[Chapter~5, \S3, Theorem~6]{CLO},
\[
 \C[y_1,\ldots,y_r,z]/I\mathrm{\ \ \ with \ \ \ } I=(f_1,\ldots,f_r,z\Delta-1)
\]
is finite-dimensional over $\C$. Hence, for each $i$, sufficiently many powers $1,y_i,y_i^2,\ldots$ are linearly dependent modulo $I$ over $\mathbb{C}$. Since $I$ is generated by polynomials with rational coefficients, such a dependence can be chosen with coefficients in $\mathbb{Q}.$ Thus $y_i$ satisfies a nonzero polynomial in $\mathbb{Q}[T]$, and so every coordinate of every characteristic-zero shape is algebraic over $\mathbb{Q}$.
\end{proof}

The following is the affine-line theorem of Yekhanin and Dumer, in the notation needed here.

\begin{lemma}[Yekhanin--Dumer \cite{YD}]\label{lem:affineline}
Suppose $p>(r+2)K$ and $m>r!$ is prime. Every shape over $\F_{p^m}$ has all its coordinates in $\F_p$. Consequently, all locators of every dependency lie on the affine line
\[
 x_1+(x_2-x_1)\F_p.
\]
\end{lemma}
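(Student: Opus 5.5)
The plan is to use the Galois action of $\Gal(\F_{p^m}/\F_p)$ together with the bound of at most $r!$ shapes from Lemma~\ref{lem:finiteshapes}. Fix $\delta\in\cD$. The system \eqref{eq:shapes} has integer coefficients, and the side condition that $0,1,y_1,\ldots,y_r$ be pairwise distinct is preserved by field automorphisms. Hence the Frobenius $\sigma:y\mapsto y^p$ maps shapes over $\F_{p^m}$ to shapes over $\F_{p^m}$, acting coordinatewise. Since $p>(r+2)K\ge\max(r,K)$, Lemma~\ref{lem:finiteshapes} applies over $\Fpbar$. Therefore the set of shapes over $\Fpbar$ is finite, of size at most $r!$, and is stable under $\sigma$.

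Take a shape $y=(y_1,\ldots,y_r)$ with coordinates in $\F_{p^m}$ and look at its $\sigma$-orbit. The orbit has length equal to the degree $[\F_p(y_1,\ldots,y_r):\F_p]$. This degree divides $m$, because all coordinates lie in $\F_{p^m}$. The orbit consists of distinct shapes, so its length is at most $r!$. Since $m$ is prime, the orbit length is $1$ or $m$. The hypothesis $m>r!$ rules out $m$, so the orbit length is $1$. Then $y_i^p=y_i$ for every $i$, which means every coordinate lies in $\F_p$.

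For the consequence, let $x_1,\ldots,x_w$ carry a dependency. By Lemma~\ref{lem:vandermonde}, $w\ge r+2$. Also $w\le s=r+2$ by definition, so $w=r+2$. Reindex the coefficients as a tuple in $\cD$ and apply the normalization $x\mapsto(x-x_1)/(x_2-x_1)$. Lemma~\ref{lem:vandermonde} says this affine map preserves dependencies, so the normalized points $y_i=(x_{i+2}-x_1)/(x_2-x_1)$ form a shape over $\F_{p^m}$. By the first part, each $y_i\in\F_p$. Hence $x_{i+2}=x_1+(x_2-x_1)y_i$ lies on $x_1+(x_2-x_1)\F_p$, and trivially so do $x_1$ and $x_2$.

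The only delicate step is the counting bound. The at-most-$r!$ bound must hold for shapes over the algebraic closure, and it relies on the nonsingularity argument in Lemma~\ref{lem:finiteshapes}. That argument needs $p>\max(r,K)$ so that $r!$ and the $\delta_{i+2}$ are nonzero modulo $p$, and our hypothesis $p>(r+2)K$ guarantees this. One must also check that Frobenius conjugates are genuinely distinct shapes when the orbit is nontrivial, but that is automatic: conjugates are distinct points by the definition of the orbit.
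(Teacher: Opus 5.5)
Your proof is correct and follows the paper's own argument. Coordinatewise Frobenius permutes the at most $r!$ shapes supplied by Lemma~\ref{lem:finiteshapes}, and since $m$ is prime a nontrivial orbit in $\F_{p^m}^r$ would have length $m>r!$; undoing the normalization then puts every locator on $x_1+(x_2-x_1)\F_p$.
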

\begin{proof}
Apply \cite[Lemma~10]{YD} with base field $\F_p$ and $t=r$ to \eqref{eq:shapes}. Its coefficients are nonzero modulo $p$, and $p>r$.
The short argument is as follows. Coordinatewise $p$th powers permute the shapes. A shape in $\F_{p^m}^r$ not fixed by this map has an orbit of length $m$, since $m$ is prime. This contradicts the bound of $r!$ in Lemma~\ref{lem:finiteshapes}. Hence every coordinate is fixed by $a\mapsto a^p$ and lies in $\F_p$. Undoing the normalization gives the stated line; this is \cite[Theorem~4]{YD}.
\end{proof}

The affine-line theorem alone is not enough for our construction: we cannot lift an arbitrary coefficient in $\F_p$ to a bounded real coefficient. What remains is to show that the particular ratio $y_1=(x_3-x_1)/(x_2-x_1)$ is the reduction of a member of a fixed finite set of real algebraic numbers. Section~\ref{sec:sphere} constructs large sets avoiding these ratios. Section~\ref{sec:five} supplies the ratios directly from quadratic equations when $d=5$, and Section~\ref{sec:general} supplies them in general using polynomial identities and Chebotarev's theorem.

\section{A sphere over a number field}\label{sec:sphere}
\subsection{Small algebraic integers}\label{subsec:integers}
A \emph{number field} $E$ is a finite extension of $\Q$. Its degree $D=[E:\Q]$ is the dimension of $E$ as a vector space over $\Q$. Such a field has exactly $D$ embeddings
\[
 \sigma:E\hookrightarrow\C,
\]
where an embedding means an injective field homomorphism that fixes $\Q$. In this section we use only number fields that admit a real embedding. We fix one such embedding and identify $E$ with its image in $\R$; thus we write $E\subseteq\R$. The other embeddings $E\hookrightarrow\C$ need not have real image, so $E$ need not be totally real.

An \emph{algebraic integer} is an algebraic number that is a root of a monic polynomial with integer coefficients. The algebraic integers belonging to $E$ form a ring, denoted by $\OO_E$, called the \emph{ring of integers} of $E$. A basic fact from algebraic number theory is that there are elements $\omega_1,\ldots,\omega_D\in\OO_E$ such that every element of $\OO_E$ can be written uniquely as
\[
 n_1\omega_1+\cdots+n_D\omega_D,\qquad n_1,\ldots,n_D\in\Z.
\]
We will also use certain subrings $R\subseteq\OO_E$ with the same $D$-dimensional integer-lattice structure. More precisely, an \emph{order} in $E$ is a subring $R\subseteq\OO_E$ containing $1$ for which there are elements $\omega_1,\ldots,\omega_D\in R$ such that every element of $R$ has a unique representation
\[
 n_1\omega_1+\cdots+n_D\omega_D,\qquad n_1,\ldots,n_D\in\Z,
\]
and $\omega_1,\ldots,\omega_D$ form a basis of $E$ when rational coefficients are allowed. We call $\omega_1,\ldots,\omega_D$ a \emph{$\Z$-basis} of $R$. The ring $\OO_E$ itself is an order. Another example is $\Z[\alpha]$ whenever $\alpha$ is an algebraic integer with $E=\Q(\alpha)$.

For $z\in E$, its \emph{norm} from $E$ to $\Q$ is
\[
 N_{E/\Q}(z)=\prod_{\sigma}\sigma(z),
\]
where the product runs over all $D$ embeddings $\sigma:E\hookrightarrow\C$. Equivalently, consider multiplication by $z$ as a $\Q$-linear map from $E$ to itself. Its determinant is $N_{E/\Q}(z)$. If $z\in R$, then multiplication by $z$ sends $R$ to itself, so in a $\Z$-basis it is represented by an integer matrix. Consequently,
\[
 N_{E/\Q}(z)\in\Z.
\]
Moreover, $N_{E/\Q}(z)=0$ only when $z=0$. These standard facts about number fields and their rings of integers are discussed in \cite[Chapters~2--3]{Milne}.

We will reduce elements of $R$ modulo suitable prime numbers. A \emph{reduction of $R$ at $p$} is a ring homomorphism
\[
 \varphi_p:R\longrightarrow\F_p
\]
with $\varphi_p(1)=1$. Such a homomorphism is automatically surjective, since its image contains the prime subfield $\F_p$. On ordinary integers this is necessarily the usual reduction modulo $p$. The key point for us is that if an element of $R$ is small under every embedding of $E$ and reduces to zero, then it must already have been zero.

\begin{example}\label{ex:sqrt2}
Take $E=\Q(\sqrt2)$ and $R=\Z[\sqrt2]$, with $\Z$-basis $1,\sqrt2$. The two embeddings $E\hookrightarrow\C$ send $\sqrt2$ to $\sqrt2$ and $-\sqrt2$, respectively. Thus
\[
 N_{E/\Q}(a+b\sqrt2)=(a+b\sqrt2)(a-b\sqrt2)=a^2-2b^2.
\]
If $t\in\F_p$ satisfies $t^2=2$, then
\[
 a+b\sqrt2\longmapsto a+bt
\]
defines a reduction $\varphi_p:R\to\F_p$. If $z=a+b\sqrt2$ lies in the kernel, then its norm is divisible by $p$. Therefore, if both $|a+b\sqrt2|$ and $|a-b\sqrt2|$ are smaller than $\sqrt p$, then $|N_{E/\Q}(z)|<p$. Since the norm is an integer divisible by $p$, it must be zero, and hence $z=0$.
\end{example}

The same argument works in every number field.

\begin{lemma}[Exactness from small conjugates]\label{lem:smallconjugates}
Let $\varphi_p:R\to\F_p$ be a reduction at $p$. If $z\in R$ satisfies $\varphi_p(z)=0$ and
\[
 |\sigma(z)|<p^{1/D}
\]
for every embedding $\sigma:E\hookrightarrow\C$, then $z=0$.
\end{lemma}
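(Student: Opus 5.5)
The plan is to copy the argument of Example~\ref{ex:sqrt2} with the norm in place of $a^2-2b^2$. The goal is to show that $N_{E/\Q}(z)$ is an integer divisible by $p$ and of absolute value less than $p$. It must then be zero, and so $z=0$ by the facts recalled above.

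\textbf{The norm is small.} The product $N_{E/\Q}(z)=\prod_\sigma\sigma(z)$ has exactly $D$ factors, each of absolute value less than $p^{1/D}$. Hence $|N_{E/\Q}(z)|<p$, and we already know $N_{E/\Q}(z)\in\Z$ since $z\in R$.

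\textbf{The norm is divisible by $p$.} This is the step needing an argument. I would use the determinant description of the norm. Let $\mu_z$ be multiplication by $z$ on $R$, and let $P\in\Z^{D\times D}$ be its matrix in a $\Z$-basis $\omega_1,\ldots,\omega_D$. The Cayley--Hamilton theorem gives $\chi(T)=\det(T I-P)\in\Z[T]$, which is monic of degree $D$ with $\chi(z)=0$ in $E$, because $\chi(\mu_z)=0$ on $R$ and $R$ contains $1$. The constant term of $\chi$ is $(-1)^D\det P=(-1)^D N_{E/\Q}(z)$. Write $\chi(T)=T\,g(T)+(-1)^DN_{E/\Q}(z)$ with $g\in\Z[T]$. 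Then
\[
 (-1)^DN_{E/\Q}(z)=-z\,g(z)\in R.
\]
Apply the ring homomorphism $\varphi_p$ to this identity. Since $\varphi_p(z)=0$, the right side maps to $0$. On $\Z\subseteq R$, the map $\varphi_p$ is reduction modulo $p$, because $\varphi_p(1)=1$. Therefore $N_{E/\Q}(z)\equiv0\pmod p$.

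An alternative is to observe that $\ker\varphi_p$ is an ideal of $R$ of index $p$, so $z\in\ker\varphi_p$ gives $p\mid[R:zR]=|N_{E/\Q}(z)|$. That route, however, needs the index formula for orders. The Cayley--Hamilton route uses only what was set up in Subsection~\ref{subsec:integers}, so I would prefer it.

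\textbf{Conclusion.} The norm is an integer multiple of $p$ with absolute value less than $p$, so $N_{E/\Q}(z)=0$, and hence $z=0$.

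The only point I expect to need care is the divisibility step. The subtlety is that $\varphi_p$ lives only on $R$, not on $E$. Writing the norm as $z$ times an element of $R$, which the integrality of $g$ guarantees, handles this. Everything else is immediate.
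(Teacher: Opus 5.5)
Your proof is correct and has the same structure as the paper's: $N_{E/\Q}(z)$ is an integer of absolute value less than $p$ that is divisible by $p$, hence zero. The two arguments differ only in the divisibility step. The paper notes that the nonzero row vector $(\varphi_p(\omega_1),\ldots,\varphi_p(\omega_D))$ lies in the left kernel of the multiplication matrix modulo $p$, so that matrix is singular mod $p$. Your Cayley--Hamilton identity $(-1)^D N_{E/\Q}(z)=-z\,g(z)$ instead exhibits the norm as an element of $zR$, which is equally elementary and sidesteps checking that this vector is nonzero.
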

\begin{proof}
Fix a $\Z$-basis $\omega_1,\ldots,\omega_D$ of $R$, and let $M_z$ be the $D\times D$ integer matrix of multiplication by $z$ in this basis. Thus
\[
 z\omega_j=\sum_{i=1}^D(M_z)_{ij}\omega_i,
 \qquad\det M_z=N_{E/\Q}(z).
\]
Consider the row vector
\[
 v=\bigl(\varphi_p(\omega_1),\ldots,\varphi_p(\omega_D)\bigr)\in\F_p^D.
\]
It is nonzero: otherwise $\varphi_p$ would vanish on every integer linear combination of the basis elements, hence on all of $R$, contradicting $\varphi_p(1)=1$.

Apply $\varphi_p$ to the identity for $z\omega_j$. Since $\varphi_p(z)=0$, we obtain
\[
 0=\sum_{i=1}^D(M_z)_{ij}\varphi_p(\omega_i)
\]
for every $j$, or $vM_z=0$ over $\F_p$. Because $v\ne0$, the matrix $M_z$ is singular modulo $p$, and therefore $p\mid\det M_z=N_{E/\Q}(z)$. On the other hand,
\[
 |N_{E/\Q}(z)|=\prod_{\sigma:E\hookrightarrow\C}|\sigma(z)|<p.
\]
The norm is an integer divisible by $p$, so it must be zero. Hence $z=0$.
\end{proof}

\subsection{Exact lifting from a box}\label{subsec:box}
Fix a $\Z$-basis $\omega_1,\ldots,\omega_D$ of $R$ and a finite set $\Lambda\subseteq E\setminus\{0,1\}$. Write each $\lambda\in\Lambda$ as $\lambda=t_\lambda/d_\lambda$ with $t_\lambda\in R$ and a positive integer $d_\lambda$. This is possible because the basis of $R$ spans $E$ over $\Q$: write $\lambda$ in that basis and clear the rational denominators. When $p$ divides no $d_\lambda$, a reduction $\varphi_p$ extends to these elements by
\[
 \bar\lambda:=\frac{\varphi_p(t_\lambda)}{d_\lambda}\in\F_p.
\]
For an integer $H\ge1$ consider the box
\begin{equation}\label{eq:box}
 B_H=\left\{\sum_{a=1}^D n_a\omega_a:0\le n_a<H\right\},
 \qquad |B_H|=H^D.
\end{equation}
Since the basis and $\Lambda$ are fixed, there is a constant $C\ge1$ depending only on them such that for every embedding $\sigma$, all $u,v,w\in B_H$ and all $\lambda\in\Lambda$,
\begin{equation}\label{eq:boxbound}
 |\sigma(u-v)|\le CH,\qquad
 \left|\sigma\bigl(d_\lambda(w-u)+t_\lambda(u-v)\bigr)\right|\le CH.
\end{equation}
Indeed, $u-v$ has basis coefficients of absolute value less than $H$, so
$|\sigma(u-v)|\le H\sum_a|\sigma(\omega_a)|$; the second bound follows by multiplying the corresponding bounds for $w-u$ and $u-v$ by $d_\lambda$ and $|\sigma(t_\lambda)|$. There are only finitely many embeddings and coefficients to consider.

\begin{lemma}[Exact lifting]\label{lem:exactlifting}
Let $p$ be a prime dividing no $d_\lambda$, admitting a reduction $\varphi_p$ of $R$, and satisfying $p^{1/D}\ge4C$. Put
\begin{equation}\label{eq:H}
 H=\left\lfloor\frac{p^{1/D}}{2C}\right\rfloor.
\end{equation}
Then $\varphi_p$ is injective on $B_H$, and for all $u,v,w\in B_H$ and $\lambda\in\Lambda$,
\[
 \varphi_p(w)=(1-\bar\lambda)\varphi_p(u)+\bar\lambda\varphi_p(v)
 \quad\Longrightarrow\quad w=(1-\lambda)u+\lambda v.
\]
\end{lemma}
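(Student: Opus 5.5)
Both assertions follow from Lemma~\ref{lem:smallconjugates} applied to a suitable element of $R$, once the box bounds \eqref{eq:boxbound} and the choice \eqref{eq:H} show that this element has all conjugates below $p^{1/D}$. The plan is to apply the exactness lemma twice: first to $u-v$ for injectivity, then to a cleared-denominator version of the collinearity relation.

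\textbf{Step 1: injectivity.} Suppose $u,v\in B_H$ and $\varphi_p(u)=\varphi_p(v)$. Put $z=u-v\in R$, so $\varphi_p(z)=0$. By \eqref{eq:boxbound} and \eqref{eq:H}, every embedding satisfies
\[
|\sigma(z)|\le CH\le p^{1/D}/2<p^{1/D}.
\]
Lemma~\ref{lem:smallconjugates} then gives $z=0$. The hypothesis $p^{1/D}\ge4C$ is used only to ensure $H\ge2$, so that the box is nontrivial; it is not needed for the inequality itself.

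\textbf{Step 2: clearing the denominator.} Fix $\lambda=t_\lambda/d_\lambda$ and assume
\[
\varphi_p(w)=(1-\bar\lambda)\varphi_p(u)+\bar\lambda\varphi_p(v),
\]
which rearranges to $\varphi_p(w)-\varphi_p(u)=\bar\lambda\bigl(\varphi_p(u)-\varphi_p(v)\bigr)$ up to sign. Since $\bar\lambda=\varphi_p(t_\lambda)/d_\lambda$ and $d_\lambda$ is invertible mod $p$, multiplying by $d_\lambda$ gives $\varphi_p(z)=0$ for the element
\[
z=d_\lambda(w-u)+t_\lambda(u-v)\in R.
\]
The sign must be checked carefully. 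The target identity $w=(1-\lambda)u+\lambda v$ is $w-u=\lambda(v-u)$, so the correct element is $d_\lambda(w-u)-t_\lambda(v-u)=d_\lambda(w-u)+t_\lambda(u-v)$, which is exactly the expression bounded in \eqref{eq:boxbound}.

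\textbf{Step 3: conclusion.} By \eqref{eq:boxbound}, $|\sigma(z)|\le CH<p^{1/D}$ for every embedding $\sigma$, so Lemma~\ref{lem:smallconjugates} forces $z=0$ in $R\subseteq E$. Dividing by $d_\lambda$ inside $E$ gives $w-u=\lambda(v-u)$, i.e.\ $w=(1-\lambda)u+\lambda v$.

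There is no real obstacle here. The only points needing care are that $\varphi_p$ is a ring homomorphism, so $\varphi_p(d_\lambda)$ is the usual reduction of the integer $d_\lambda$, and that the sign convention in $z$ matches \eqref{eq:boxbound}.
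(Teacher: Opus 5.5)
Your proposal is correct and follows the paper's proof essentially verbatim: apply Lemma~\ref{lem:smallconjugates} to $u-v$ for injectivity, and to the cleared-denominator element $d_\lambda(w-u)+t_\lambda(u-v)$ for the lifting, with \eqref{eq:boxbound} and \eqref{eq:H} giving $CH\le p^{1/D}/2<p^{1/D}$. Your sign check is right, although your intermediate rearrangement should read $\varphi_p(w)-\varphi_p(u)=\bar\lambda\bigl(\varphi_p(v)-\varphi_p(u)\bigr)$ exactly rather than ``up to sign.''
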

\begin{proof}
By \eqref{eq:boxbound} and \eqref{eq:H}, every embedding of $u-v$ and of $d_\lambda(w-u)+t_\lambda(u-v)$ has absolute value at most $CH\le p^{1/D}/2<p^{1/D}$. If $\varphi_p(u)=\varphi_p(v)$, then $\varphi_p(u-v)=0$ and Lemma~\ref{lem:smallconjugates} gives $u=v$.
If $\varphi_p(w)=(1-\bar\lambda)\varphi_p(u)+\bar\lambda\varphi_p(v)$, multiplying by $d_\lambda$ gives
\[
 \varphi_p\bigl(d_\lambda(w-u)+t_\lambda(u-v)\bigr)=0.
\]
Lemma~\ref{lem:smallconjugates} gives $d_\lambda(w-u)+t_\lambda(u-v)=0$ in $R$; dividing by $d_\lambda$ in $E$ yields $w=(1-\lambda)u+\lambda v$.
\end{proof}

\subsection{A large level set}\label{subsec:levelset}
For $v=(v_1,\ldots,v_m)\in B_H^m$, let
\[
 Q(v)=\sum_{j=1}^m v_j^2\in R.
\]
Write each $v_j$ in the fixed basis $\omega_1,\ldots,\omega_D$. Its coefficients have absolute value less than $H$, and the products $\omega_a\omega_b$ are fixed integer combinations of the basis elements. Thus each basis coefficient of $Q(v)$ is an integer of absolute value $O_R(mH^2)$, where the implied constant also depends on the chosen basis. Consequently there is a constant $C_2\ge1$, depending only on $R$ and its basis, such that each coefficient has at most $C_2mH^2$ possible values. It follows that $Q$ takes at most $(C_2mH^2)^D$ values on $B_H^m$. A largest level set $T\subseteq B_H^m$ of $Q$ therefore satisfies
\begin{equation}\label{eq:levelsize}
 |T|\ge\frac{H^{Dm}}{(C_2mH^2)^D}.
\end{equation}

\begin{lemma}\label{lem:noline}
Three pairwise distinct points of a level set of $Q$ never lie on an affine line over $E$: if $u,v,w\in T$ are pairwise distinct, then $w\ne(1-\lambda)u+\lambda v$ for every $\lambda\in E$.
\end{lemma}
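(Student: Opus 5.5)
The plan is to reuse identity \eqref{eq:sphereidentity} from the proof of Lemma~\ref{lem:behrend}, now with coordinates in $E\subseteq\R$. The essential point is that $E$ has been identified with a subfield of $\R$ via a fixed real embedding. Hence every $v\in T\subseteq B_H^m\subseteq R^m$ is a genuine real vector in $\R^m$. Moreover, $Q(v)=\sum_j v_j^2$ is exactly its squared Euclidean norm $\|v\|^2$ computed in $\R$, and positivity is available.

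First, suppose $u,v,w\in T$ are pairwise distinct and $w=(1-\lambda)u+\lambda v$ for some $\lambda\in E$. The identity \eqref{eq:sphereidentity} is a polynomial identity in $\lambda$ and the coordinates, valid over any commutative ring. Applying it with the roles of $\lambda$ and $1-\lambda$ swapped gives
\[
 Q(w)=(1-\lambda)Q(u)+\lambda Q(v)-\lambda(1-\lambda)Q(u-v).
\]
All three points lie in the same level set, so $Q(u)=Q(v)=Q(w)$, and therefore $\lambda(1-\lambda)Q(u-v)=0$ in $E$.

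Next, I rule out each factor. If $\lambda=0$ then $w=u$, and if $\lambda=1$ then $w=v$; both contradict pairwise distinctness. So $\lambda\ne0,1$ and $\lambda(1-\lambda)\ne0$ in the field $E$, which forces $Q(u-v)=0$. Now I use the real embedding: $Q(u-v)=\sum_j(u_j-v_j)^2$ is a sum of squares of real numbers. It vanishes only if every $u_j=v_j$, that is, only if $u=v$. This is a contradiction.

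The only step needing care is this last one. Over an arbitrary number field a sum of squares can vanish nontrivially, for instance in $\Q(i)$. So the argument must genuinely use that $E$ carries a real embedding, and that $Q$ is evaluated through it. Unlike in Behrend's case, no positivity of $\lambda(1-\lambda)$ is needed: we only need it to be nonzero, which is why arbitrary $\lambda\in E$, not just $\lambda\in(0,1)$, is allowed.
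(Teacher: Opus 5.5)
Your proof is correct and is essentially the paper's argument. You use the identity $Q(w)-(1-\lambda)Q(u)-\lambda Q(v)=-\lambda(1-\lambda)\sum_j(u_j-v_j)^2$, exclude $\lambda\in\{0,1\}$ by distinctness, and use the fixed real embedding $E\subseteq\R$ to conclude that the vanishing sum of squares forces $u=v$.
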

\begin{proof}
Suppose $w=(1-\lambda)u+\lambda v$ with $Q(u)=Q(v)=Q(w)$. Expanding the squares,
\[
 Q(w)-(1-\lambda)Q(u)-\lambda Q(v)
 =-\lambda(1-\lambda)\sum_{j=1}^m(u_j-v_j)^2.
\]
The left side is zero, and $\lambda\notin\{0,1\}$ because $w\ne u,v$, so $\sum_j(u_j-v_j)^2=0$ in $E$. In the real embedding $E\subseteq\R$ this is a vanishing sum of squares of real numbers, so $u_j=v_j$ for all $j$, contradicting $u\ne v$.
\end{proof}

Only the one real embedding is used here, whereas Lemma~\ref{lem:smallconjugates} uses all embeddings; $E$ need not be totally real.

\begin{proposition}[Arithmetic sphere construction]\label{prop:arithmeticsphere}
Let $E\subseteq\R$, $R$ and $\Lambda$ be as above. There are constants $a\ge0$ and $b\ge1$ depending only on these objects such that the following holds for every prime $p$ that divides no $d_\lambda$, admits a reduction of $R$, and satisfies $p^{1/D}\ge4C$. For every $m\ge1$ there is a set $S_{p,m}\subseteq\F_{p^m}$ with
\begin{equation}\label{eq:spheresize}
 \log|S_{p,m}|\ge(m-2)\log p-am-D\log(bm)
\end{equation}
that contains no three pairwise distinct elements $x,y,z$ with $z=(1-\bar\lambda)x+\bar\lambda y$ for some $\lambda\in\Lambda$.
\end{proposition}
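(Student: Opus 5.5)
The plan is to combine the level set $T\subseteq B_H^m$ from \eqref{eq:levelsize}, Lemma~\ref{lem:noline}, and the exact lifting Lemma~\ref{lem:exactlifting}, applied coordinatewise. We use a fixed $\F_p$-basis of $\F_{p^m}$ to encode vectors. Choose $\beta_1,\ldots,\beta_m\in\F_{p^m}$ forming a basis over $\F_p$, and define
\[
 \Psi:B_H^m\to\F_{p^m},\qquad \Psi(v)=\sum_{j=1}^m\varphi_p(v_j)\beta_j .
\]
Put $S_{p,m}=\Psi(T)$. Since $\varphi_p$ is injective on $B_H$ by Lemma~\ref{lem:exactlifting} and the $\beta_j$ are independent, $\Psi$ is injective, so $|S_{p,m}|=|T|$.

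\textbf{Exclusion of ratio triples.} Suppose $x=\Psi(u)$, $y=\Psi(v)$, $z=\Psi(w)$ are pairwise distinct, so $u,v,w\in T$ are pairwise distinct, and suppose $z=(1-\bar\lambda)x+\bar\lambda y$ with $\lambda\in\Lambda$. The scalar $\bar\lambda$ lies in $\F_p$, so comparing coordinates in the basis $\beta_j$ gives
\[
 \varphi_p(w_j)=(1-\bar\lambda)\varphi_p(u_j)+\bar\lambda\varphi_p(v_j)\qquad\text{for every } j.
\]
Lemma~\ref{lem:exactlifting} then lifts each of these to $w_j=(1-\lambda)u_j+\lambda v_j$ in $E$. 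Hence $w=(1-\lambda)u+\lambda v$ as vectors over $E$, which contradicts Lemma~\ref{lem:noline}. This is the conceptual heart of the argument. It is easy only because the ratio lies in the prime field, so the relation splits coordinatewise. The triple lemma requires no bound on vector length: the smallness is per coordinate, with no carries, unlike the base-$B$ encoding in Section~\ref{sec:four}.

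\textbf{Counting.} By \eqref{eq:levelsize}, $\log|T|\ge Dm\log H-D\log(C_2mH^2)$. From \eqref{eq:H} and $p^{1/D}\ge4C$ we get $H\ge p^{1/D}/(2C)-1\ge p^{1/D}/(4C)$, so $D\log H\ge\log p-D\log(4C)$. Also $H\le p^{1/D}$, so $D\log(H^2)\le2\log p$. Combining these,
\[
 \log|T|\ge m\log p-mD\log(4C)-2\log p-D\log(C_2m)=(m-2)\log p-am-D\log(bm)
\]
with $a=D\log(4C)$ and $b=C_2$. The only care needed is routine: these constants must depend only on $E$, $R$, the basis, and $\Lambda$, not on $p$ or $m$. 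This holds because $C$ and $C_2$ were fixed earlier. I expect no real obstacle; the delicate point is only checking that the floor in \eqref{eq:H} costs at most a factor of $2$, which the hypothesis $p^{1/D}\ge4C$ guarantees.
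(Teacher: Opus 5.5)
Your proposal is correct and follows the paper's proof essentially step for step. It uses the same largest level set $T$ and the same injective coordinatewise reduction into $\F_{p^m}$ through a fixed $\F_p$-basis. It lifts the relation coordinatewise via Lemma~\ref{lem:exactlifting} to contradict Lemma~\ref{lem:noline}, and it does the same counting with $H\ge p^{1/D}/(4C)$, $H\le p^{1/D}$, $a=D\log(4C)$ and $b=C_2$.
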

\begin{proof}
Let $H$ be as in \eqref{eq:H}, let $T$ be a largest level set of $Q$ on $B_H^m$, and let $S_{p,m}$ be the image of $T$ under coordinatewise reduction $B_H^m\to\F_p^m$ followed by the identification of $\F_p^m$ with $\F_{p^m}$ through a fixed $\F_p$-basis. Reduction is injective on $B_H$ by Lemma~\ref{lem:exactlifting}, so $|S_{p,m}|=|T|$.
Suppose $x,y,z\in S_{p,m}$ are pairwise distinct with $z=(1-\bar\lambda)x+\bar\lambda y$, and let $u,v,w\in T$ be their preimages, also pairwise distinct. Since $\bar\lambda\in\F_p$ and the identification is $\F_p$-linear, the relation holds in each of the $m$ coordinates. Lemma~\ref{lem:exactlifting} lifts it coordinatewise to $w=(1-\lambda)u+\lambda v$ in $E^m$, contradicting Lemma~\ref{lem:noline}.

For the size, \eqref{eq:H} and $p^{1/D}\ge4C$ give $p^{1/D}/(4C)\le H\le p^{1/D}$, so $H^{Dm}\ge p^m/(4C)^{Dm}$ and $H^{2D}\le p^2$. Substituting into \eqref{eq:levelsize},
\[
 \log|T|\ge m\log p-mD\log(4C)-2\log p-D\log(C_2m),
\]
which is \eqref{eq:spheresize} with $a=D\log(4C)$ and $b=C_2$.
\end{proof}

\subsection{From locator sets to codes}\label{subsec:localization}
The sets $S_{p,m}$ avoid a fixed finite list of affine ratios. To make them admissible we need two more things: every dependency on $r+2$ locators must contain a triple in one of these ratios, and suitable primes must exist at every scale. We package both requirements in a definition. The restriction on $m$ is the one in the Yekhanin--Dumer theorem.

\begin{definition}\label{def:localization}
Fix $q$ and $d$, with $K=q-1$ and $r=d-3$. A \emph{localization datum} for $(q,d)$ consists of a number field $E\subseteq\R$, an order $R\subseteq E$, a finite set $\Lambda\subseteq E\setminus\{0,1\}$ with fixed representations $\lambda=t_\lambda/d_\lambda$ as above, and a set of primes $\cP$, such that
\begin{enumerate}[label=(\roman*),leftmargin=2.2em]
\item $\cP$ contains a prime in $[Y,2Y]$ for every sufficiently large $Y$;
\item every sufficiently large $p\in\cP$ divides no $d_\lambda$ and admits a reduction $\varphi_p:R\to\F_p$;
\item for every such $p$, every prime $m>r!$, and every dependency on distinct locators $x_1,\ldots,x_{r+2}\in\F_{p^m}$, listed in any order, one has
$x_3=(1-\bar\lambda)x_1+\bar\lambda x_2$ for some $\lambda\in\Lambda$.
\end{enumerate}
\end{definition}

\begin{proposition}\label{prop:localization}
If $(q,d)$ admits a localization datum, then Theorem~\ref{thm:main} holds for $(q,d)$.
\end{proposition}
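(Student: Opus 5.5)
The plan is to combine Proposition~\ref{prop:arithmeticsphere} with Lemma~\ref{lem:momentcode}, tuning the parameters $p$ and $m$ against $n$ in the Behrend manner. The first step shows that $S_{p,m}$ is admissible whenever $p\in\cP$ is large and $m>r!$ is prime. Lemma~\ref{lem:vandermonde} says every dependency involves exactly $r+2$ distinct locators. Condition~(iii) of Definition~\ref{def:localization}, applied to the locators in their given order, then yields $x_3=(1-\bar\lambda)x_1+\bar\lambda x_2$ for some $\lambda\in\Lambda$. Here $x_1,x_2,x_3$ are pairwise distinct elements of $S_{p,m}$, so this is exactly a configuration that Proposition~\ref{prop:arithmeticsphere} excludes. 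We also need $p>(r+2)K$, so that the setting of Section~\ref{sec:moments} applies; this holds for large $p$. Condition~(ii) supplies the hypotheses of Proposition~\ref{prop:arithmeticsphere}: no $d_\lambda$ is divisible by $p$, a reduction exists, and $p^{1/D}\ge4C$ once $p$ is large.

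The second step is the parameter choice. Given large $n$, put $t=\sqrt{\log n}$ and choose a prime $m$ in $[t,2t]$ with $m>r!$; Bertrand's postulate provides one for large $n$. We want $|S_{p,m}|\ge n$. By \eqref{eq:spheresize} it suffices that
\[
 (m-2)\log p\ge\log n+am+D\log(bm).
\]
Set $Y=\exp\bigl((\log n+am+D\log(bm))/(m-2)\bigr)$. This tends to infinity with $n$, since $(\log n)/m\gtrsim\sqrt{\log n}$. Condition~(i) then gives a prime $p\in\cP\cap[Y,2Y]$, so $\log p\le\log Y+\log 2$.

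The third step is the redundancy computation. Lemma~\ref{lem:momentcode} gives a code of size at least $q^n/p^{rm+1}$, so the redundancy in natural-log units is $(rm+1)\log p$. We bound
\[
 rm\log p\le\frac{rm}{m-2}\bigl(\log n+am+D\log(bm)\bigr)+rm\log2 .
\]
Since $m/(m-2)=1+O(1/m)$, this equals $r\log n+O\bigl((\log n)/m+m+\log m\bigr)=r\log n+O(\sqrt{\log n})$ for $m\asymp\sqrt{\log n}$. The extra term $\log p$ is $O(\log n/m)=O(\sqrt{\log n})$. Exponentiating gives \eqref{eq:main} with $r=d-3$, and the stated $\limsup$ bound follows.

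The main obstacle is the bookkeeping of this balance. The loss $(2r)\log p$ from the term $(m-2)$ and the loss $am$ pull in opposite directions, and $m$ must also be prime and exceed $r!$. Choosing $m\asymp\sqrt{\log n}$ balances $(\log n)/m$ against $m$, and the existence of primes of $\cP$ in dyadic intervals absorbs the remaining rounding at cost $O(m)$.
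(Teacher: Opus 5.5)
Your proof is correct and follows the paper's argument essentially step for step. You get admissibility of $S_{p,m}$ from Lemma~\ref{lem:vandermonde} and condition~(iii), take a prime $m\asymp\sqrt{\log n}$ via Bertrand, make the same choice of $Y$ with $p\in\cP\cap[Y,2Y]$ from condition~(i), and obtain the same estimate $(rm+1)\log p=r\log n+O(\sqrt{\log n})$; your explicit checks that $p>(r+2)K$ and $p^{1/D}\ge4C$ simply spell out what the paper absorbs into ``let $p\in\cP$ be large.''
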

\begin{proof}
Let $p\in\cP$ be large, let $m>r!$ be prime, and let $S_{p,m}$ be the set of Proposition~\ref{prop:arithmeticsphere}. It is admissible: a dependency on its elements involves exactly $r+2$ locators by Lemma~\ref{lem:vandermonde}, and by (iii) three of them satisfy $x_3=(1-\bar\lambda)x_1+\bar\lambda x_2$ with $\lambda\in\Lambda$, which $S_{p,m}$ does not permit. Admissibility is inherited by subsets, so Lemma~\ref{lem:momentcode} gives, for every $n\le|S_{p,m}|$, a $q$-ary code of length $n$, minimum distance at least $d$, and redundancy at most $(rm+1)\log_q p$.

It remains to choose $p$ and $m$ in terms of $n$. Write $X=\log n$ and $M=\lceil\sqrt X\rceil+3$. By Bertrand's postulate, choose a prime $m$ with $M<m<2M$. For all sufficiently large $n$ we then have $m>r!$, and $m=\Theta(\sqrt X)$. Define $Y$ by
\[
 \log Y=\frac{X+am+D\log(bm)}{m-2}.
\]
As $n\to\infty$, also $Y\to\infty$. Condition (i) supplies $p\in\cP\cap[Y,2Y]$, and \eqref{eq:spheresize} gives $\log|S_{p,m}|\ge X$, that is, $|S_{p,m}|\ge n$. The redundancy of the resulting code, in natural-log units, is at most
\begin{align*}
 (rm+1)\log p
 &\le\frac{rm+1}{m-2}\bigl(X+am+D\log(bm)\bigr)+(rm+1)\log2\\
 &=rX+O_{q,d}(\sqrt X).
\end{align*}
Indeed, $(rm+1)/(m-2)=r+(2r+1)/(m-2)$ and $m=\Theta(\sqrt X)$, so the additional terms are $O(X/m+m+\log m)=O(\sqrt X)$. Exponentiating gives \eqref{eq:main}, as $r=d-3$.
\end{proof}

The field $E$, the set $\Lambda$, and hence $a,b,D$ are fixed before $p$ and $m$ grow. Requiring $m$ to be prime changes only the lower-order term. Condition (i), on the other hand, is essential to this calculation: an infinite set of permitted primes with arbitrarily large multiplicative gaps would not suffice.

\section{Distance five}\label{sec:five}
We now take $d=5$, so $r=2$: the checksums are the symbol sum and the first two moments, and by Lemma~\ref{lem:vandermonde} the only dependencies to worry about involve exactly four locators. Here the shapes are roots of quadratics and the auxiliary primes can be found in an arithmetic progression, so nothing beyond quadratic reciprocity is required. The section also serves as a template for the general case.

\subsection{The shapes}
A four-term dependency has coefficients $(a,b,c,e)\in\cD$, that is, nonzero integers of absolute value at most $K$ with $a+b+c+e=0$, and distinct locators $x_1,\ldots,x_4$ with
\[
 ax_1+bx_2+cx_3+ex_4=0,\qquad
 ax_1^2+bx_2^2+cx_3^2+ex_4^2=0.
\]
After normalizing $x_1=0$ and $x_2=1$, the remaining two locators $y,z$ satisfy $b+cy+ez=0$ and $b+cy^2+ez^2=0$. Multiplying the second equation by $e$ and substituting $ez=-(b+cy)$ from the first shows that $y$ is a root of
\begin{equation}\label{eq:quadratic}
 P_{a,b,c,e}(Y)=c(c+e)Y^2+2bcY+b(b+e).
\end{equation}
If $c+e\ne0$ this is a quadratic, and a short calculation using $b+c+e=-a$ gives its discriminant as
\begin{equation}\label{eq:discriminant}
 (2bc)^2-4c(c+e)b(b+e)=4abce.
\end{equation}
If $c+e=0$ the polynomial is $2bcY+b(b-c)$, of degree one with nonzero leading coefficient. In either case $y$ is a root of a nonzero polynomial of degree at most two whose coefficients are integers built from $a,b,c,e$.

There are finitely many tuples $(a,b,c,e)$ and hence finitely many polynomials \eqref{eq:quadratic}. Their real roots lie in the multiquadratic field
\[
 E=\Q(\sqrt{\ell_1},\ldots,\sqrt{\ell_h}),
\]
where $\ell_1,\ldots,\ell_h$ are the primes not exceeding $K$: a positive discriminant $4abce$ is four times a product of positive integers at most $K$, so its square root is an integer times a product of some of the $\sqrt{\ell_i}$, and the root in the linear case is rational. For $K=1$ we have $E=\Q$. Let $\Lambda\subseteq E\setminus\{0,1\}$ be the set of all real roots other than $0$ and $1$ of all the polynomials \eqref{eq:quadratic}, with representations $\lambda=t_\lambda/d_\lambda$ fixed as in Section~\ref{sec:sphere}.

\subsection{The field and the primes}
As an order in $E$, take 
\[
R=\Z[\sqrt{\ell_1},\ldots,\sqrt{\ell_h}].
\]
Each $\sqrt{\ell_i}$ is an algebraic integer, since it satisfies the monic polynomial $X^2-\ell_i$. As sums and products of algebraic integers are again
algebraic integers, we have $R\subseteq\mathcal O_E$.

Since no nonempty product of distinct primes is a square of a rational number, a standard fact about multiquadratic extensions gives $D=[E:\Q]=2^h$; see
\cite[Theorem~5.30 and Example~5.31(b)]{MilneFT}.
A $\Q$-basis of $E$ and a $\Z$-basis of $R$ are given by the $2^h$ products
\[
\omega_I=\prod_{i\in I}\sqrt{\ell_i},
\qquad I\subseteq\{1,\ldots,h\}.
\]
The multiplication rule is
\[
 \omega_I\omega_J=\left(\prod_{i\in I\cap J}\ell_i\right)\omega_{I\mathbin{\triangle}J}.
\]
All embeddings of $E$ are real. Put
\begin{equation}\label{eq:TK}
 T_K=8\prod_{\substack{\ell\le K\\\ell\text{ odd prime}}}\ell,
\end{equation}
and let $\cP$ be the set of primes $p\equiv-1\pmod{T_K}$.

\begin{lemma}\label{lem:quadraticprimes}
Let $p\in\cP$. Then $-1$ is a quadratic nonresidue modulo $p$, every positive integer at most $K$ is a quadratic residue, and $R$ admits a reduction at $p$. Moreover, $\cP$ contains a prime in $[Y,2Y]$ for all sufficiently large $Y$.
\end{lemma}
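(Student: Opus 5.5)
Fix $p\in\cP$, so $p\equiv-1\pmod 8$ and $p\equiv-1\pmod\ell$ for every odd prime $\ell\le K$. Since $p\equiv3\pmod4$, Euler's criterion gives $\left(\frac{-1}{p}\right)=(-1)^{(p-1)/2}=-1$. Since $p\equiv7\pmod8$, the supplementary law gives $\left(\frac{2}{p}\right)=1$.

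For an odd prime $\ell\le K$, quadratic reciprocity gives
\[
\left(\frac{\ell}{p}\right)=\left(\frac{p}{\ell}\right)(-1)^{\frac{\ell-1}2\cdot\frac{p-1}2}.
\]
The exponent factor $\frac{p-1}{2}$ is odd, so the sign is $(-1)^{(\ell-1)/2}=\left(\frac{-1}{\ell}\right)$. Also $\left(\frac{p}{\ell}\right)=\left(\frac{-1}{\ell}\right)$ because $p\equiv-1\pmod\ell$. Hence $\left(\frac{\ell}{p}\right)=\left(\frac{-1}{\ell}\right)^2=1$.

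Every positive integer at most $K$ is a product of primes $\ell\le K$. Since $p>K$ for $p\in\cP$ large (indeed $p\ge T_K-1\ge K$), each such integer is a unit modulo $p$. By multiplicativity of the Legendre symbol, each is a quadratic residue modulo $p$. One small point needs care: this requires $p$ not to divide the integer, which holds once $p>K$. For tiny $K$ the first primes in $\cP$ might not exceed $K$, and I would simply note that only large $p$ matter.

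For the reduction, pick $t_i\in\F_p$ with $t_i^2=\ell_i$. Define $\varphi_p$ on the $\Z$-basis by $\omega_I\mapsto\prod_{i\in I}t_i$ and extend $\Z$-linearly. This is well defined because $\{\omega_I\}$ is a $\Z$-basis of $R$. It sends $1=\omega_\emptyset$ to $1$. It is multiplicative on basis elements because
\[
\prod_{i\in I}t_i\prod_{j\in J}t_j=\prod_{i\in I\cap J}t_i^2\prod_{k\in I\triangle J}t_k=\Bigl(\prod_{i\in I\cap J}\ell_i\Bigr)\prod_{k\in I\triangle J}t_k,
\]
which matches the multiplication rule for the $\omega_I$. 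By bilinearity, $\varphi_p$ is then a ring homomorphism $R\to\F_p$.

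Alternatively, view $R$ as $\Z[X_1,\dots,X_h]/(X_i^2-\ell_i)$, which uses the fact that $D=2^h$. The evaluation $X_i\mapsto t_i$ then kills each $X_i^2-\ell_i$ and factors through the quotient.

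The density claim follows from the prime number theorem in arithmetic progressions. Since $\gcd(-1,T_K)=1$, the number of primes $p\le x$ with $p\equiv-1\pmod{T_K}$ is $\sim x/(\varphi(T_K)\log x)$. Therefore the count in $[Y,2Y]$ is $\sim Y/(\varphi(T_K)\log Y)\to\infty$, and it is positive for all large $Y$. This is the only non-elementary input, used with fixed modulus $T_K$.

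The main thing I expect to need care is checking the reduction is a genuine ring homomorphism, that is, the well-definedness just described.
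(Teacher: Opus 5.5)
Your proof is correct and matches the paper's argument step for step. Both use quadratic reciprocity with $p\equiv7\pmod 8$ and $p\equiv-1\pmod\ell$, define the reduction on the basis $\omega_I$ through chosen square roots and check it against the multiplication rule, and obtain the density claim from the prime number theorem for arithmetic progressions.

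Your hedge about small $p$ is unnecessary. Every $p\in\cP$ is odd, and $p\equiv-1\pmod\ell$ rules out $p=\ell$ for each odd prime $\ell\le K$. So $p$ is none of the primes up to $K$, and $p>K$ holds for all of $\cP$, not just for large $p$.
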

\begin{proof}
By the supplementary laws of quadratic reciprocity, $p\equiv 7\pmod 8$ implies that $-1$ is a quadratic nonresidue and $2$ is a quadratic residue modulo $p$. For an odd prime $\ell\le K$ we have $p\equiv-1\pmod\ell$ and $p\equiv3\pmod4$, so quadratic reciprocity gives
\[
 \left(\frac\ell p\right)
 =\left(\frac p\ell\right)(-1)^{(\ell-1)/2}
 =\left(\frac{-1}\ell\right)(-1)^{(\ell-1)/2}=1.
\]
Every positive integer at most $K$ is a product of such primes and of $2$, hence a residue. To define a reduction, choose a square root $s_i\in\F_p$ of each $\ell_i$ and send $\omega_I\mapsto\prod_{i\in I}s_i$; the multiplication rule shows that this is a ring homomorphism, and it is onto because it restricts to the natural map on $\Z$.
Finally, the prime number theorem for arithmetic progressions \cite[Theorem~4.12]{KedlayaAP} gives
\[
 \#\{p\le x:p\equiv-1\pmod{T_K}\}\sim\frac{x}{\phi(T_K)\log x},
\]
where $\phi$ is Euler's totient function. Comparing this at $Y$ and $2Y$ produces a prime of $\cP$ in $[Y,2Y]$ for large $Y$.
\end{proof}

\begin{lemma}\label{lem:quadraticlifts}
Let $p\in\cP$ with $p>4K$ and $p$ dividing no $d_\lambda$, and let $m$ be odd. Every dependency on distinct $x_1,\ldots,x_4\in\F_{p^m}$ satisfies $x_3=(1-\bar\lambda)x_1+\bar\lambda x_2$ for some $\lambda\in\Lambda$.
\end{lemma}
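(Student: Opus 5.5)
The plan is to normalize the dependency, show that the ratio $y=(x_3-x_1)/(x_2-x_1)$ is a root of the polynomial \eqref{eq:quadratic} reduced modulo $p$, use the oddness of $m$ to force $y\in\F_p$, and then identify $y$ with the reduction of a real root lying in $\Lambda$.

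\textbf{Normalization.} Let $(a,b,c,e)\in\cD$ be the coefficients of the dependency, attached to $x_1,\ldots,x_4$ in the given order; any reordering is again a dependency with permuted coefficients, so it suffices to treat this order. By Lemma~\ref{lem:vandermonde} we may apply $x\mapsto(x-x_1)/(x_2-x_1)$. This gives $y=(x_3-x_1)/(x_2-x_1)$ and $z=(x_4-x_1)/(x_2-x_1)$, with $y\notin\{0,1\}$ because the locators are distinct. The elimination in Section~\ref{sec:five} is purely algebraic, so it holds over $\F_{p^m}$ and gives $\bar P(y)=0$, where $\bar P$ is \eqref{eq:quadratic} reduced modulo $p$.

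\textbf{Degree and $y\in\F_p$.} Because $p>4K$, the leading coefficient of $\bar P$ is nonzero:
\begin{itemize}[leftmargin=2.2em]
\item if $c+e\ne0$, then $c(c+e)$ is a nonzero integer of absolute value at most $2K^2$; I still need to check this against $p>4K$, or else enlarge the requirement to $p$ large, which is harmless;
\item if $c+e=0$, the leading coefficient is $2bc$, a nonzero integer of absolute value at most $2K^2$.
\end{itemize}
The discriminant $4abce$ is also nonzero modulo $p$. So $\bar P$ is a polynomial over $\F_p$ of degree one or two, and its roots lie in $\F_p$ or in $\F_{p^2}$. Since $m$ is odd, $\F_{p^2}\not\subseteq\F_{p^m}$. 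Hence $y$ lies in $\F_p$, and in the quadratic case $4abce$ is a nonzero square in $\F_p$.

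\textbf{The sign argument (main step).} If $abce<0$, then $abce=-\prod|\cdot|$, where each factor is a positive integer at most $K$. Each such factor is a quadratic residue modulo $p$ and $-1$ is a nonresidue, by Lemma~\ref{lem:quadraticprimes}. So $abce$ would be a nonresidue, a contradiction. Hence $abce>0$, and $P_{a,b,c,e}$ has two distinct real roots
\[
\lambda_\pm=\frac{-bc\pm\sqrt{abce}}{c(c+e)}\in E,
\qquad \sqrt{abce}=k\,\omega_I\ \text{ for some } k\in\Z.
\]
Applying $\varphi_p$ to the identity $(k\omega_I)^2=abce$ shows that $\varphi_p(k\omega_I)$ is a square root of $abce$ in $\F_p$. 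Therefore the reductions $\bar\lambda_\pm$ are exactly the two roots of $\bar P$ in $\F_p$. To justify this I will check that $\bar\lambda=\varphi_p(t_\lambda)/d_\lambda$ agrees with evaluating the displayed formula in $\F_p$: the relation $d_\lambda\lambda=t_\lambda$ holds in $R$, and the integers $d_\lambda$ and $c(c+e)$ are invertible modulo $p$. In the linear case the single root is rational and the same reasoning applies.

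\textbf{Membership in $\Lambda$.} We now know $y=\bar\lambda$ for some real root $\lambda$. If $\lambda$ were $0$ or $1$, its reduction would be $0$ or $1$, which is impossible since $y\notin\{0,1\}$. So $\lambda\in\Lambda$. Finally, $y=\bar\lambda$ means $x_3-x_1=\bar\lambda(x_2-x_1)$, that is, $x_3=(1-\bar\lambda)x_1+\bar\lambda x_2$.

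I expect the only delicate points to be the sign argument and the compatibility of the reduction with the square root; everything else is bookkeeping.
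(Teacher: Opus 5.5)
Your proposal is correct and follows the paper's proof essentially step for step. The steps match: normalization and elimination to $P_{a,b,c,e}$, degree at most two together with $m$ odd forcing $y\in\F_p$, the sign of $abce$ via $-1$ being a nonresidue, and identifying $y$ with the reduction of a real root in $\Lambda$. The point you left open needs no enlargement of $p$, as the paper observes: $c$ and $c+e$ (resp.\ $2$, $b$, $c$ in the linear case) are nonzero integers of absolute value at most $2K<p$, so the prime $p$ divides none of them and hence not the leading coefficient.
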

\begin{proof}
Normalize as above. The coordinate $y=(x_3-x_1)/(x_2-x_1)$ is a root of the reduction of $P_{a,b,c,e}$ modulo $p$, which is a nonzero polynomial of degree at most two over $\F_p$: its leading coefficient is a nonzero integer all of whose prime factors are at most $2K<p$. Hence $y$ has degree at most two over $\F_p$. But $y$ also lies in $\F_{p^m}$ with $m$ odd, so its degree divides $m$; therefore $y\in\F_p$. For prime $m>2$ this is also an immediate consequence of Lemma~\ref{lem:affineline}.

Suppose $c+e\ne0$. Since the quadratic has a root in $\F_p$, its discriminant $4abce$ is a square in $\F_p$. It is nonzero there, because $p$ is odd and exceeds $|a|,|b|,|c|,|e|$. If $abce<0$, then $4abce=-4|abce|$ is a nonresidue by Lemma~\ref{lem:quadraticprimes}, since $|abce|$ is a product of positive integers at most $K$ and $-1$ is a nonresidue; this is impossible. So $abce>0$, the two square roots of the discriminant in $\F_p$ are the reductions of $\pm2\sqrt{|abce|}\in E$, and the quadratic formula, valid since $p$ is odd and does not divide $c(c+e)$, shows that $y$ is the reduction of one of the two real roots of $P_{a,b,c,e}$. If $c+e=0$, then $y$ is the reduction of the rational root. In both cases $y\notin\{0,1\}$ because $x_3\ne x_1,x_2$, so $y=\bar\lambda$ for some $\lambda\in\Lambda$. Undoing the normalization gives
\[
 x_3=x_1+y(x_2-x_1)=(1-\bar\lambda)x_1+\bar\lambda x_2.\qedhere
\]
\end{proof}

\begin{proof}[Proof of Corollary~\ref{cor:fourfive} at distance five]
Every prime $m>r!=2$ is odd, so Lemmas~\ref{lem:quadraticprimes} and~\ref{lem:quadraticlifts} show that $(E,R,\Lambda,\cP)$ is a localization datum for $(q,5)$. Proposition~\ref{prop:localization} gives $\rho_q(n,5)\le2\log_q n+O_q(\sqrt{\log n})$. The lower bound is \eqref{eq:hamming} with $t=2$.
\end{proof}

\begin{remark}[Alternative Sidon-set route]\label{rem:sidon} There is a shorter route to the distance-five upper bound using $k$-fold Sidon sets. In the terminology of Cilleruelo and Timmons~\cite{CT}, a set $A\subseteq\mathbb Z$ is $k$-fold Sidon if every solution in $A$ to an equation
\[
    c_1x_1+c_2x_2+c_3x_3+c_4x_4=0,
    \qquad
    |c_i|\le k,\quad \sum_{i=1}^4 c_i=0,
\]
is trivial: the indices can be partitioned into groups such that the variables in each group are equal and the corresponding coefficients sum to zero. In particular, for every $2\le w\le 4$, there is no relation
\[
    \sum_{i=1}^w c_i x_i=0,
    \qquad
    0<|c_i|\le k,\quad \sum_{i=1}^w c_i=0,
\]
on pairwise distinct elements $x_i\in A$. For $w<4$, this follows by padding the equation with zero coefficients.

Cilleruelo and Timmons~\cite{CT} report that, for every fixed $k$, there exist sets $A\subseteq[1,N]$ with this property and
\[
|A|\ge N^{1/2}\exp(-O_k(\sqrt{\log N})).
\]
Importantly, a single set $A$ avoids all such equations simultaneously, rather than just one fixed equation. They attribute this extension of Ruzsa's four-variable construction~\cite[Theorem~7.3]{Ruzsa} to Axenovich and Verstra\"ete (personal communication and unpublished work).

Taking $k=K=q-1$, choose $n$ distinct locators $a_i$ from such a set and use the two checks
\[
 \sum_i u_i\pmod{4K+1},\qquad
 \sum_i a_i u_i\pmod{4KN+1}.
\]
Suppose two distinct words in one fiber differ in $w\le 4$ positions, with distinct locators $x_1,\ldots,x_w\in A$ and nonzero coordinate differences $\delta_1,\ldots,\delta_w$, where $|\delta_i|\le K$. Since the checksum moduli exceed $4K$ and $4KN$, respectively, the congruences lift to exact integer equalities
\[
    \sum_{i=1}^w\delta_i=0,
    \qquad
    \sum_{i=1}^w\delta_i x_i=0.
\]
The first equality rules out $w=1$, and for $2\le w\le 4$ the second contradicts the $K$-fold Sidon property stated above. Inverting the displayed size bound gives $N\le n^2\exp(O_K(\sqrt{\log n}))$, and hence redundancy $2\log_q n+O_q(\sqrt{\log n})$.

This would give a short proof of the distance-five bound if the stated Sidon-set estimate were available. However, the required simultaneous construction was
attributed to unpublished work. More recently, Fabian, Morris, and Szab\'o~\cite[Section~7.1]{FMS} reported that they could not recover its proof and pointed
out a potential difficulty in simultaneously avoiding equations for which the product of the four coefficients is not a square. Our proof does not rely on this unpublished result: the localization argument above establishes the distance-five bound directly and extends to arbitrary $d$.
\end{remark}

\begin{remark}\label{rem:quadraticfrobenius}
The congruence condition on $p$ has a Galois-theoretic meaning that anticipates Section~\ref{sec:general}. All roots of the polynomials \eqref{eq:quadratic}, real or not, lie in
$L=\Q(\sqrt{-1},\sqrt{\ell_1},\ldots,\sqrt{\ell_h})$.
For a prime $p\equiv-1\pmod{T_K}$, the $p$th-power map on their reductions fixes each $\sqrt{\ell_i}$ and negates $\sqrt{-1}$. This is the action of complex conjugation on the original roots. Consequently, roots whose reductions lie in $\F_p$ have real lifts. Section~\ref{sec:general} obtains this same property for general polynomials, using Chebotarev's theorem in place of the prime number theorem for arithmetic progressions.
\end{remark}

\section{General distance}\label{sec:general}
We return to arbitrary $d\ge4$, with $K=q-1$ and $r=d-3$. The Yekhanin--Dumer theorem already puts the normalized locators in $\F_p$ when $m>r!$ is prime. We only need to show that the first normalized coordinate is the reduction of one of finitely many fixed real algebraic numbers. We do this with a one-variable polynomial identity, followed by a choice of primes.

\subsection{One polynomial for the first coordinate}\label{subsec:onepolynomial}
Recall the polynomials $f_1,\ldots,f_r$ from \eqref{eq:shapes} and $\Delta$ from \eqref{eq:Delta}. A tuple $y$ is a shape exactly when it extends to a solution of
\[
 f_1(y)=\cdots=f_r(y)=0,\qquad z\Delta(y)-1=0.
\]

\begin{lemma}[A polynomial relation that survives reduction]\label{lem:polynomialrelation}
For every $\delta\in\cD$ there are a monic polynomial $P_\delta(T)\in\Q[T]$ and a positive integer $A_\delta$ such that, for every prime $p\nmid A_\delta$, the polynomial $P_\delta$ can be reduced modulo $p$ and every shape $y$ over $\Fpbar$ satisfies
\[
 \overline{P_\delta}(y_1)=0.
\]
The constant polynomial $P_\delta=1$ is allowed; in that case there are no shapes in these characteristics.
\end{lemma}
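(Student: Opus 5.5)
We work in the ideal $I_\delta=(f_1,\ldots,f_r,z\Delta(y)-1)\subseteq\Q[y_1,\ldots,y_r,z]$ and eliminate all variables except $y_1$. By Lemma~\ref{lem:finiteshapes}, the complex solutions of this system are finite in number. Hence the quotient $\C[y,z]/I_\delta\C[y,z]$ is finite-dimensional, exactly as in the algebraicity part of that proof. Since $I_\delta$ has rational generators, the elimination ideal $I_\delta\cap\Q[y_1]$ is nonzero. Let $P_\delta$ be its monic generator. If there are no complex shapes, the Nullstellensatz gives $1\in I_\delta$, and we take $P_\delta=1$.

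Membership in the ideal gives an explicit certificate:
\[
 P_\delta(y_1)=\sum_{j=1}^r g_j\,f_j+h\,\bigl(z\Delta(y)-1\bigr),\qquad g_j,h\in\Q[y_1,\ldots,y_r,z].
\]
We fix one such identity. Let $A_\delta$ be a common denominator of all coefficients of $P_\delta$, the $g_j$, and $h$. The $f_j$ and $\Delta$ already have integer coefficients, coming from $\delta$. For a prime $p\nmid A_\delta$, every polynomial in the identity has coefficients in $\Z_{(p)}$. The identity holds in $\Z_{(p)}[y,z]$, so we may reduce it to an identity in $\F_p[y,z]$. In particular $\overline{P_\delta}$ is defined, and it is monic, or equal to $1$.

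Now let $y$ be a shape over $\Fpbar$. Then $\Delta(y)\ne0$, because the points $0,1,y_1,\ldots,y_r$ are pairwise distinct. Set $z=1/\Delta(y)$ and evaluate the reduced identity at $(y,z)$. The right side vanishes, since $f_j(y)=0$ and $z\Delta(y)=1$. Hence $\overline{P_\delta}(y_1)=0$. When $P_\delta=1$ this reads $1=0$, so no shapes exist in that characteristic, as the statement claims.

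The only delicate point is that the certificate must be a genuine polynomial identity with finitely many rational coefficients. That is what lets the relation survive reduction for all but finitely many primes. A relation that merely holds on the complex solution set would not obviously survive, which is why we use ideal membership and not the Nullstellensatz radical. The nonzeroness of the elimination ideal comes from finite-dimensionality; over $\Q$ it follows by the same linear-dependence argument as in Lemma~\ref{lem:finiteshapes}, applied to the powers of $y_1$.

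We deliberately make no claim about the degree of $P_\delta$ or about its roots modulo $p$. Those questions are left for the later Chebotarev step.
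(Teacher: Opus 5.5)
Your proposal is correct and follows essentially the same route as the paper. Both obtain a nonzero element of $I_\delta\cap\Q[y_1]$ from the finite-dimensionality argument of Lemma~\ref{lem:finiteshapes}, make it monic, clear denominators in the ideal-membership certificate, reduce modulo $p\nmid A_\delta$, and evaluate at a shape with $z=1/\Delta(y)$. The only differences are cosmetic: you take the monic generator of the elimination ideal and reduce over $\Z_{(p)}$, whereas the paper multiplies through by $A_\delta$ to get an identity over $\Z$.
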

\begin{proof}
The proof of Lemma~\ref{lem:finiteshapes}, applied to $y_1$, gives a nonzero polynomial $P_\delta(T)\in\Q[T]$ for which $P_\delta(y_1)$ is a polynomial combination, over $\Q$, of $f_1,\ldots,f_r,z\Delta-1$. Dividing by its leading coefficient makes $P_\delta$ monic. Clearing all denominators in this identity gives a positive integer $A_\delta$ and polynomials $U_1,\ldots,U_r,V\in\Z[y_1,\ldots,y_r,z]$ such that $A_\delta P_\delta\in\Z[T]$ and
\begin{equation}\label{eq:certificate}
 A_\delta P_\delta(y_1)
 =\sum_{j=1}^r U_j(y,z)f_j(y)+V(y,z)\bigl(z\Delta(y)-1\bigr).
\end{equation}
For $p\nmid A_\delta$, reduce this identity modulo $p$ and substitute a shape, with $z=1/\Delta(y)$. The right side vanishes. Since $A_\delta$ is invertible modulo $p$, the desired relation follows.
\end{proof}

The point is the identity \eqref{eq:certificate}, not merely the fact that $P_\delta(y_1)$ vanishes at all complex shapes. The identity can be reduced modulo $p$ directly. We do not need to show that every shape modulo $p$ is the reduction of a characteristic-zero shape. Nor do we need every root of $P_\delta$ to occur as a coordinate of a shape: additional roots only add finitely many ratios to the list we will avoid.

\subsection{Choosing primes with real root lifts}\label{subsec:realroots}
Let $\Omega\subseteq\C$ be the union of the sets of complex roots of the polynomials $P_\delta$, where each distinct root is included only once, regardless of its multiplicity or how many polynomials contain it. Let $L\subseteq\C$ be their splitting field over $\Q$, that is, the field generated by all these roots. If $\Omega$ is empty, take $L=\Q$. Since the polynomials have rational coefficients, $L/\Q$ is a finite Galois extension. Its Galois group
\[
 G=\Gal(L/\Q)
\]
is the finite group of field automorphisms of $L$ that fix $\Q$; its elements permute the roots in $\Omega$. Complex conjugation restricts to one such automorphism $c\in G$. Its fixed field is
\begin{equation}\label{eq:realfield}
 E=\{\alpha\in L:c(\alpha)=\alpha\}=L\cap\R.
\end{equation}
We use this particular embedding $E\subseteq\R$ and take $R=\OO_E$. Other embeddings of $E$ need not be real.

\smallskip\noindent\textbf{Reducing algebraic numbers.}
Choose a positive integer $M$ such that $M\alpha\in\OO_L$ for every $\alpha\in\Omega$. Such an $M$ exists: a $\Z$-basis of $\OO_L$ spans $L$ over $\Q$, so one can clear the finitely many rational denominators of these roots in that basis.

To reduce elements of $\OO_L$ modulo a prime $p$, we use a surjective ring homomorphism
\[
\psi:\OO_L\longrightarrow\kappa,
\] 
where $\kappa$ is a finite field of characteristic $p$. Such maps exist for every $p$. Indeed, since $\OO_L$ has a $\Z$-basis of size $[L:\Q]$, the quotient
$\OO_L/p\OO_L$ is a finite ring. Taking its quotient by a maximal ideal gives a finite field $\kappa$, necessarily of characteristic $p$. The composition
of the quotient maps gives $\psi$. Different choices of maximal ideal may give different reductions. In algebraic number theory, the kernel of $\psi$ is called a prime ideal of $\OO_L$ lying
above $p$. Unlike the reductions in Section~\ref{sec:sphere}, the target $\kappa$ may be a proper extension of $\F_p$. If $p\nmid M$, we also define
\[
 \bar\alpha=\frac{\psi(M\alpha)}{M}\in\kappa\qquad(\alpha\in\Omega).
\]
This is just the denominator-clearing convention used for $\lambda$ in Section~\ref{subsec:box}.

Outside a finite set of primes, distinct roots in $\Omega$ have distinct reductions, for every choice of $\psi$. Here is a direct verification. The product
\[
 J=\prod_{\substack{\alpha,\beta\in\Omega\\\alpha\ne\beta}}
       M(\alpha-\beta)
\]
over ordered pairs is a nonzero integer. Indeed, every factor is an algebraic integer, and every automorphism in $G$ merely permutes the factors, so $J$ is rational as well as an algebraic integer. If two distinct roots have equal reductions, one factor reduces to zero and hence $p\mid J$. Excluding the prime divisors of $MJ$ prevents all such collisions. Empty products are understood to be $1$.

\smallskip\noindent\textbf{Frobenius describes the $p$th-power map.}
We use the following standard fact about number fields \cite[Chapter~8]{Milne}. Outside a finite set of primes (the ramified primes), each reduction $\psi$ has an associated automorphism $g_\psi\in G$, called its \emph{Frobenius automorphism}, characterized by
\begin{equation}\label{eq:frobenius}
 \psi\bigl(g_\psi(\alpha)\bigr)=\psi(\alpha)^p
 \qquad(\alpha\in\OO_L).
\end{equation}
Thus $g_\psi$ describes, in characteristic zero, the action of the familiar map $a\mapsto a^p$ after reduction. As $\psi$ varies over the reductions at the same prime $p$, the automorphisms $g_\psi$ run through one conjugacy class in $G$. In particular, if this class contains $c$, we can choose a reduction for which $g_\psi=c$ exactly.

\smallskip\noindent\textbf{Choosing the primes.}
Chebotarev's density theorem \cite[Theorem~22.5]{KedlayaCheb} says that the set of primes whose Frobenius class is the class of $c$ has counting function
\begin{equation}\label{eq:chebotarev}
 \#\{p\le x:g_\psi=c\text{ for some reduction }\psi\}
 \sim\frac{|[c]|}{|G|}\frac{x}{\log x},
\end{equation}
where $[c]=\{hch^{-1}:h\in G\}$. Exclude the finitely many ramified primes, the prime divisors of $MJ$ and of all the $A_\delta$, and the primes at most $(r+2)K$. Let $\cP$ be the remaining primes counted in \eqref{eq:chebotarev}. For each $p\in\cP$, fix a reduction
\[
 \psi_p:\OO_L\longrightarrow\kappa_p
 \quad\text{with}\quad g_{\psi_p}=c.
\]
Removing finitely many primes does not affect the asymptotic formula. Comparing it at $Y$ and $2Y$ shows that $\cP$ contains a prime in $[Y,2Y]$ for every sufficiently large $Y$.

Define
\[
 \Lambda=(\Omega\cap\R)\setminus\{0,1\}\subseteq E.
\]
For $\lambda\in\Lambda$, take $t_\lambda=M\lambda\in\OO_E$ and $d_\lambda=M$. Let $\varphi_p$ be the restriction of $\psi_p$ to $\OO_E$. The next lemma supplies both the reduction and the real ratios needed in Section~\ref{sec:sphere}.

\begin{lemma}[Real lifts of roots]\label{lem:realrootlifts}
For every $p\in\cP$:
\begin{enumerate}[label=(\alph*),leftmargin=2.2em]
\item $\varphi_p$ takes values in $\F_p$ and is a reduction of $\OO_E$ at $p$;
\item every $t\in\F_p$ satisfying $\overline{P_\delta}(t)=0$, for any $\delta\in\cD$, is the reduction of a real root of $P_\delta$. If $t\notin\{0,1\}$, then $t=\bar\lambda$ for some $\lambda\in\Lambda$.
\end{enumerate}
\end{lemma}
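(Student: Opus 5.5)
For (a), the plan is to use the Frobenius relation \eqref{eq:frobenius} with $g_{\psi_p}=c$. Every $\alpha\in\OO_E$ is fixed by $c$ by \eqref{eq:realfield}, so
\[
 \psi_p(\alpha)^p=\psi_p(c(\alpha))=\psi_p(\alpha).
\]
Hence $\psi_p(\alpha)$ is a root of $T^p-T$ in $\kappa_p$, and these roots form exactly the prime subfield $\F_p$. Thus $\varphi_p=\psi_p|_{\OO_E}$ is a ring homomorphism $\OO_E\to\F_p$ with $\varphi_p(1)=1$, which is a reduction in the sense of Section~\ref{subsec:integers}. Since $p\nmid M$, the element $\bar\lambda=\varphi_p(t_\lambda)/d_\lambda$ is well defined. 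It agrees with the reduction $\psi_p(M\lambda)/M$ used for roots in $\Omega$.

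For (b), first factor $P_\delta$ over $\C$ as $\prod_{\alpha}(T-\alpha)$, with $\alpha$ running over the roots in $\Omega$ with multiplicity. Multiplying each factor by $M$ gives
\[
 M^{\deg P_\delta}P_\delta(T)=\prod_\alpha(MT-M\alpha),
\]
an identity in $\OO_L[T]$. Now apply $\psi_p$ coefficientwise. On the left, the coefficients of $M^{\deg}P_\delta$ are rational with denominators dividing $A_\delta$ (by \eqref{eq:certificate}, $A_\delta P_\delta\in\Z[T]$). These coefficients lie in $\OO_L\cap\Q=\Z$ after the scaling; if they do not, I will instead multiply through by $A_\delta$ as well. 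Since $p\nmid A_\delta M$, the image is the scaled reduction $\overline{P_\delta}$ in $\kappa_p[T]$, and we obtain
\[
 \overline{P_\delta}(T)=\prod_\alpha(T-\bar\alpha)\qquad\text{in }\kappa_p[T].
\]
So if $t\in\F_p$ is a root of $\overline{P_\delta}$, then $t=\bar\alpha$ for some root $\alpha$ of $P_\delta$.

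It remains to show that this $\alpha$ is real. Apply \eqref{eq:frobenius} to $M\alpha$. Since $M\in\Z$ and $t\in\F_p$, we get
\[
 \psi_p(M\,c(\alpha))=\psi_p(M\alpha)^p=(Mt)^p=Mt=\psi_p(M\alpha).
\]
Hence $\overline{c(\alpha)}=\bar\alpha$. Because $c$ permutes $\Omega$, the element $c(\alpha)$ is again a root in $\Omega$. Since $p\nmid MJ$, distinct roots have distinct reductions, so $c(\alpha)=\alpha$, and $\alpha\in L\cap\R=E$.

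Finally, suppose $t\notin\{0,1\}$. Then $\alpha\ne0,1$, because $\bar0=0$ and $\bar1=1$. Therefore $\alpha\in\Lambda$ and $t=\bar\alpha=\bar\lambda$.

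The main obstacle is the bookkeeping in reducing the factorization. We must make sure that the reduction of $P_\delta$ in the sense of Lemma~\ref{lem:polynomialrelation} coincides with $\prod(T-\bar\alpha)$. This needs integrality of the scaled coefficients and the exclusions $p\nmid MA_\delta$. The collision-freeness from $J$ is what converts the Frobenius identity into $c(\alpha)=\alpha$.
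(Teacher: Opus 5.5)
Your proposal is correct and follows essentially the same route as the paper's proof: part (a) uses the Frobenius identity with $g_{\psi_p}=c$ and the fact that the fixed points of $a\mapsto a^p$ in $\kappa_p$ form $\F_p$, and part (b) reduces the factorization of $P_\delta$, applies Frobenius to $M\alpha$, and uses collision-freeness from $p\nmid MJ$ to force $c(\alpha)=\alpha$. Your extra bookkeeping is sound, and the hedge about $A_\delta$ is unnecessary: since $\prod_\alpha(MT-M\alpha)\in\OO_L[T]$, the coefficients of $M^{\deg P_\delta}P_\delta$ automatically lie in $\OO_L\cap\Q=\Z$.
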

\begin{proof}
For (a), let $\alpha\in\OO_E$. Since $c(\alpha)=\alpha$, \eqref{eq:frobenius} gives
\[
 \psi_p(\alpha)^p=\psi_p\bigl(c(\alpha)\bigr)=\psi_p(\alpha).
\]
The elements of the finite field $\kappa_p$ fixed by the $p$th-power map are exactly $\F_p$. Hence $\varphi_p$ takes values there. It is a unital ring homomorphism and is surjective because it reduces ordinary integers modulo $p$.

For (b), factor the monic polynomial $P_\delta$ over $L$ and reduce this factorization. All its roots, with their multiplicities, have defined reductions in $\kappa_p$. It follows that a root $t\in\F_p\subseteq\kappa_p$ of $\overline{P_\delta}$ is the reduction $\bar\lambda$ of some root $\lambda\in\Omega$ of $P_\delta$. The Frobenius identity extends to these roots by clearing the denominator $M$, since $p\nmid M$ and $c(M)=M$. Thus
\[
 \overline{c(\lambda)}=\bar\lambda^{\,p}=t^p=t=\bar\lambda.
\]
Both $\lambda$ and $c(\lambda)$ belong to $\Omega$, and distinct elements of $\Omega$ have distinct reductions. Therefore $c(\lambda)=\lambda$, so $\lambda$ is real. If $t\ne0,1$, then $\lambda\ne0,1$, and $\lambda\in\Lambda$.
\end{proof}

\subsection{The localization datum}
\begin{proposition}\label{prop:generaldatum}
The data $(E,\OO_E,\Lambda,\cP)$ above form a localization datum for $(q,d)$.
\end{proposition}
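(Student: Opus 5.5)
We check the three conditions of Definition~\ref{def:localization} in turn, using the objects fixed in Section~\ref{subsec:realroots}. The field $E=L\cap\R$ is a number field contained in $\R$ via the chosen embedding, and $R=\OO_E$ is an order. The set $\Lambda=(\Omega\cap\R)\setminus\{0,1\}$ lies in $E$ because its elements are real elements of $L$. The representations $\lambda=t_\lambda/d_\lambda$ have $t_\lambda=M\lambda$. This lies in $\OO_L\cap E=\OO_E$, since an element of $E$ that is integral over $\Z$ belongs to $\OO_E$, and $d_\lambda=M$. So the data have the required form.

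Condition (i) was already obtained from Chebotarev's theorem \eqref{eq:chebotarev}: removing finitely many primes does not change the asymptotic formula, and comparing the counts at $Y$ and $2Y$ yields a prime of $\cP$ in $[Y,2Y]$. For condition (ii), every $p\in\cP$ avoids the prime divisors of $M$, so $p\nmid d_\lambda$, and Lemma~\ref{lem:realrootlifts}(a) shows that $\varphi_p$ is a reduction of $\OO_E$ at $p$. The only thing to check here is that the $\bar\lambda$ of Section~\ref{subsec:box}, defined as $\varphi_p(t_\lambda)/d_\lambda$, agrees with the $\bar\lambda=\psi_p(M\lambda)/M$ of Section~\ref{subsec:realroots}. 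It does, because $\varphi_p$ is the restriction of $\psi_p$.

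Condition (iii) carries the content. Let $p\in\cP$, let $m>r!$ be prime, and take a dependency on distinct $x_1,\ldots,x_{r+2}\in\F_{p^m}$ with coefficients $\delta$, listed in any order. The tuple $\delta$ reordered accordingly is still in $\cD$, and every $p\in\cP$ exceeds $(r+2)K$. Lemma~\ref{lem:vandermonde} lets us normalize by $x\mapsto(x-x_1)/(x_2-x_1)$, which produces a shape $y$ for $\delta$ with $y_1=(x_3-x_1)/(x_2-x_1)$. By Lemma~\ref{lem:affineline}, $y_1\in\F_p$. Since $p\nmid A_\delta$, Lemma~\ref{lem:polynomialrelation} gives $\overline{P_\delta}(y_1)=0$; note that $P_\delta$ cannot be the constant $1$, because a shape exists. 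As $x_3\ne x_1,x_2$, we have $y_1\notin\{0,1\}$. Lemma~\ref{lem:realrootlifts}(b) then gives $y_1=\bar\lambda$ for some $\lambda\in\Lambda$. Undoing the normalization yields $x_3=x_1+\bar\lambda(x_2-x_1)=(1-\bar\lambda)x_1+\bar\lambda x_2$.

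I expect no real obstacle, since each step is an earlier lemma. The point that needs care is the bookkeeping in (iii): the relabeling of locators must correspond to a permutation of $\delta$ that stays inside $\cD$, and every prime excluded in the construction of $\cP$ must cover the hypotheses of Lemmas~\ref{lem:affineline}, \ref{lem:polynomialrelation} and~\ref{lem:realrootlifts}. Combined with Proposition~\ref{prop:localization}, this proves Theorem~\ref{thm:main} for general $d$.
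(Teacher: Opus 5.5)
Your proposal is correct and follows the paper's proof essentially step for step. Condition (i) comes from Chebotarev, condition (ii) from $p\nmid M$ together with Lemma~\ref{lem:realrootlifts}(a), and condition (iii) from normalizing, then Lemma~\ref{lem:affineline}, then Lemma~\ref{lem:polynomialrelation}, then Lemma~\ref{lem:realrootlifts}(b). Your extra bookkeeping ($M\lambda\in\OO_E$, agreement of the two definitions of $\bar\lambda$, and permutation-invariance of $\cD$) is sound and only makes explicit what the paper leaves implicit.
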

\begin{proof}
Condition (i) of Definition~\ref{def:localization} follows from \eqref{eq:chebotarev}. For (ii), the primes in $\cP$ do not divide $M=d_\lambda$, and Lemma~\ref{lem:realrootlifts}(a) provides the reductions.

For (iii), let $p\in\cP$, let $m>r!$ be prime, and let $x_1,\ldots,x_{r+2}\in\F_{p^m}$ carry a dependency with coefficients $\delta\in\cD$. The normalized locators
\[
 y_i=\frac{x_{i+2}-x_1}{x_2-x_1},\qquad 1\le i\le r,
\]
form a shape for $\delta$. Lemma~\ref{lem:affineline} gives $y_1\in\F_p$. Lemma~\ref{lem:polynomialrelation} gives $\overline{P_\delta}(y_1)=0$. Since $y_1\ne0,1$, Lemma~\ref{lem:realrootlifts}(b) gives $y_1=\bar\lambda$ for some $\lambda\in\Lambda$. Undoing the normalization,
\[
 x_3=x_1+y_1(x_2-x_1)=(1-\bar\lambda)x_1+\bar\lambda x_2.\qedhere
\]
\end{proof}

\begin{proof}[Proof of Theorem~\ref{thm:main}]
Combine Propositions~\ref{prop:generaldatum} and~\ref{prop:localization}.
\end{proof}

The roles of the ingredients are now separate. Yekhanin--Dumer puts the dependency on a line over $\F_p$. A polynomial identity restricts the needed line ratio to the roots of a fixed polynomial. Chebotarev makes that ratio the reduction of a real algebraic number. The arithmetic sphere excludes it without an additional checksum. At no point do we need to lift the other coordinates of the shape.

\subsection{Two calculations at distance six}
The following examples illustrate the effect of the choice of primes on the normalized equations.

First take the coefficient pattern $(-4,1,1,1,1)$, available when $K\ge4$. Send the locators with coefficients $-4$ and $1$ to $0$ and $1$. The remaining three locators then have first three power sums $-1,-1,-1$, so by Newton's identities they are the roots of
\[
 X^3+X^2+X+1=(X+1)(X^2+1).
\]
The shapes are the permutations of $(-1,i,-i)$. For $p>4$ they exist over $\F_p$ when $p\equiv1\pmod4$, whereas for $p\equiv3\pmod4$ they are absent from every odd-degree extension of $\F_p$. At the level of individual roots, $-1$ is real and $i,-i$ are exchanged by complex conjugation. Choosing $p\equiv3\pmod4$ makes the $p$th-power map act on their reductions in exactly the same way. In particular, neither of the nonreal roots has its reduction in $\F_p$.

This example also distinguishes a real ratio from an entirely real shape: a shape may have a real coordinate, such as $-1$, even though its other coordinates are nonreal. Our proof only uses the one coordinate needed for a triple of locators.

For a real example take $q=3$ and $d=6$. Up to reordering and a global change of sign, the five-term patterns with coefficients in $\{\pm1,\pm2\}$ and zero sum are
\[
 (1,1,1,-1,-2)\qquad\text{and}\qquad(1,1,2,-2,-2).
\]
In the first, normalize the two negatively weighted locators to $0$ and $1$. The other three then have their first three power sums equal to $2$, so their monic root polynomial is $X(X-1)^2$ and they cannot be distinct; this pattern has no shapes. In the second, the normalized equations read
\[
 x+y+2z=2,\qquad x^2+y^2+2z^2=2,\qquad x^3+y^3+2z^3=2.
\]
The first two give $x+y=2-2z$ and $xy=1-4z+3z^2$, and substituting into the third yields $6z(z-1)(2z-1)=0$. Distinctness forces $z=1/2$ and hence
\begin{equation}\label{eq:ternarysix}
 (x,y,z)=\left(\frac{1+\sqrt2}{2},\frac{1-\sqrt2}{2},\frac12\right)
\end{equation}
up to interchanging $x$ and $y$. These computations remain valid in characteristic $p>10$.

Consequently, when $2$ is a nonsquare in $\F_p$ and $m$ is odd, $\F_{p^m}$ contains no five-point dependency of either ternary pattern on distinct locators. The whole field can then serve as the locator set, giving coefficient three with no sphere at all. This illustrates the mechanism; it does not improve the known ternary distance-six bounds \cite[Section~II]{YD}.

\section{Limitations and questions}\label{sec:limitations}
\subsection{Why the argument stops at $d-3$ moments}
Suppose only $k$ nonconstant moments are kept and a dependency on $s$ distinct locators is considered. After normalizing two locators there remain $s-2$ unknowns. For $k\le s-2$ the Jacobian of the $k$ moment equations has rank $k$ at every point with distinct coordinates, since any $k$ of its columns form a nonzero scaled Vandermonde determinant, so each nonempty component of the solution locus has dimension $s-2-k$. Our argument uses the boundary case $k=s-2=d-3$, where the locus is finite. With fewer checks the configurations, when they exist at all, form positive-dimensional families. Finiteness can then no longer supply the one-variable polynomial relations used in Section~\ref{sec:general}, and the affine-line theorem of Section~\ref{sec:moments} no longer follows from a bounded number of shapes. This is a limitation of the method rather than a lower bound for unrestricted codes.

\subsection{Questions}
Can the Hamming coefficient be reached for every fixed distance over every fixed alphabet? And how far can explicit nonlinear constructions separate from linear ones beyond distance four? Any attempt to push the real-lifting strategy below $d-3$ moments will have to cope with positive-dimensional configurations or find a different way to constrain short dependencies.


\begin{thebibliography}{99}
\bibitem{BK}
M.~Bateman and N.~H.~Katz, \emph{New bounds on cap sets}, Journal of the American Mathematical Society \textbf{25} (2012), no.~2, 585--613. arXiv:1101.5851.

\bibitem{Behrend}
F.~A.~Behrend, \emph{On sets of integers which contain no three terms in arithmetical progression}, Proceedings of the National Academy of Sciences of the United States of America \textbf{32} (1946), no.~12, 331--332. DOI: 10.1073/pnas.32.12.331.

\bibitem{CT}
J.~Cilleruelo and C.~Timmons, \emph{$k$-fold Sidon sets}, Electronic Journal of Combinatorics \textbf{21} (2014), no.~4, Paper P4.12. DOI: 10.37236/3860. arXiv:1310.5374.

\bibitem{CLO}
D.~A.~Cox, J.~Little, and D.~O'Shea, \emph{Ideals, Varieties, and Algorithms: An Introduction to Computational Algebraic Geometry and Commutative Algebra}, 3rd ed., Springer, New York, 2007.

\bibitem{CLP}
E.~Croot, V.~F.~Lev, and P.~P.~Pach, \emph{Progression-free sets in $\Z_4^n$ are exponentially small}, Annals of Mathematics \textbf{185} (2017), no.~1, 331--337. DOI: 10.4007/annals.2017.185.1.7.

\bibitem{Dumer}
I.~Dumer, \emph{Nonbinary double-error-correcting codes designed by means of algebraic varieties}, IEEE Transactions on Information Theory \textbf{41} (1995), no.~6, 1657--1666. DOI: 10.1109/18.476238.

\bibitem{EG}
J.~S.~Ellenberg and D.~Gijswijt, \emph{On large subsets of $\F_q^n$ with no three-term arithmetic progression}, Annals of Mathematics \textbf{185} (2017), no.~1, 339--343. DOI: 10.4007/annals.2017.185.1.8. arXiv:1605.09223.

\bibitem{EHPS}
C.~Elsholtz, Z.~Hunter, L.~Proske, and L.~Sauermann, \emph{Improving Behrend's construction: Sets without arithmetic progressions in integers and over finite fields}, preprint, 2024. arXiv:2406.12290.

\bibitem{FMS} D. Fabian, P. Morris, and T. Szab\'o, \emph{Slow graph bootstrap percolation III: Chain constructions}, preprint, 2025. arXiv:2508.03835.

\bibitem{KedlayaAP}
K.~S.~Kedlaya, \emph{Notes on analytic number theory}, Chapter~4, ``Primes in arithmetic progressions,'' especially Theorem~4.12, online lecture notes. \url{https://kskedlaya.org/ant/chap-primes-in-ap.html} (accessed October~7, 2026).

\bibitem{KedlayaCheb}
K.~S.~Kedlaya, \emph{Notes on analytic number theory}, Chapter~22, ``Artin $L$-functions and the Chebotaryov density theorem,'' especially Theorem~22.5, online lecture notes. \url{https://kskedlaya.org/ant/chap-artin.html} (accessed October~7, 2026).

\bibitem{LX}
S.~Liu and C.~Xing, \emph{Nonlinear codes with low redundancy}, in Proceedings of the 2024 IEEE International Symposium on Information Theory (ISIT), Athens, Greece, 2024, pp.~705--710. DOI: 10.1109/ISIT57864.2024.10619117. Extended version: arXiv:2310.14219.

\bibitem{Meshulam}
R.~Meshulam, \emph{On subsets of finite abelian groups with no 3-term arithmetic progressions}, Journal of Combinatorial Theory, Series~A \textbf{71} (1995), no.~1, 168--172. DOI: 10.1016/0097-3165(95)90024-1.

\bibitem{Milne}
J.~S.~Milne, \emph{Algebraic Number Theory}, version~3.08, July~19, 2020, 166~pp. \url{https://www.jmilne.org/math/CourseNotes/ANT.pdf}.

\bibitem{MilneFT} J.~S. Milne, \emph{Fields and Galois Theory}, version 5.10, 2022. \url{https://www.jmilne.org/math/CourseNotes/FT.pdf}.

\bibitem{Ruzsa}
I.~Z.~Ruzsa, \emph{Solving a linear equation in a set of integers I}, Acta Arithmetica \textbf{65} (1993), no.~3, 259--282. DOI: 10.4064/aa-65-3-259-282.

\bibitem{YD}
S.~Yekhanin and I.~Dumer, \emph{Long nonbinary codes exceeding the Gilbert--Varshamov bound for any fixed distance}, IEEE Transactions on Information Theory \textbf{50} (2004), no.~10, 2357--2362. DOI: 10.1109/TIT.2004.834744. arXiv:cs/0406039.
\end{thebibliography}
\end{document}